        \theoremstyle{plain}
        \newtheorem{le+mma}{Lemma}[section]
        \newtheorem{proposition}{Proposition}[section]
        \theoremstyle{remark}
        \newtheorem{remark}{\bf Remark}[section]
        \theoremstyle{remark}
        \theoremstyle{remark}
\newcommand{\la}{\left\langle}
\newcommand{\ra}{\right\rangle}
\newcommand{\mo}{tr}
\newcommand{\ds}{\displaystyle }
\newcommand{\Kn}{{\rm Kn}}
\newcommand{\Prandtl}{{\rm Pr}}
\newcommand{\R}{{\mathbb{R}}}
\newcommand{\ent}{\mathbb{H}}
\newcommand{\entred}{{\cal H}}
\newcommand{\F}{{\bf F}}
\newcommand{\eps}{\varepsilon}
\newcommand{\tauc}{\tau_{\mathcal C}}
\newcommand{\PSigma}{P}
\newcommand{\etr}{e_{tr}}
\newcommand{\erot}{e_{rot}}
\newcommand{\evib}{e_{vib}}
\newcommand{\srot}{s_{rot}}
\newcommand{\svib}{s_{vib}}
\newcommand{\Teq}{T_{eq}}
\newcommand{\Ttr}{T_{tr}}
\newcommand{\Trot}{T_{rot}}
\newcommand{\Tvib}{T_{vib}}
\newcommand{\Ttrrel}{T_{tr}^{rel}}
\newcommand{\Trotrel}{T_{rot}^{rel}}
\newcommand{\Tvibrel}{T_{vib}^{rel}}
\newcommand{\cv}{c_v}
\newcommand{\cvtr}{c_v^{tr}}
\newcommand{\cvrot}{c_v^{rot}}
\newcommand{\cvvib}{c_v^{vib}}
\newcommand{\demi}{\frac{1}{2}}
\newcommand{\M}{{\cal M}}
\newcommand{\Mtr}{\M_{tr}}
\newcommand{\Mrot}{\M_{rot}}
\newcommand{\Mvib}{\M_{vib}}
\newcommand{\G}{{\cal G}}
\newcommand{\Gtr}{\G_{tr}}
\newcommand{\Grot}{\G_{rot}}
\newcommand{\Gvib}{\G_{vib}}
\newcommand{\Zvib}{Z_{vib}}
\newcommand{\Zrot}{Z_{rot}}
\newcommand{\thetavib}{\theta_{vib}}
\newcommand{\thetarot}{\theta_{rot}}
\newcommand{\tauvib}{\tau_{vib}}
\newcommand{\taurot}{\tau_{rot}}
\newcommand{\Trace}{{\rm Trace}}
\begin{document}

\begin{center}
% {\bf \corb{ An ES-BGK model for polyatomic gases in rotational and vibrational nonequilibrium (part 2): capturing the correct internal energy exchanges for diatomic gases}}
{\bf An ES-BGK model for diatomic gases with correct relaxation
  rates for internal energies}
  
\vspace{1cm}
J. Mathiaud$^{1}$,  L. Mieussens$^2$, M. Pfeiffer $^3$ \\

\medskip
{\small $^1$CEA-CESTA, 15 avenue des sabli\`eres - CS 60001,
33116 Le Barp Cedex, France,\\Univ. Bordeaux, CNRS, CELIA, UMR 5107, F-33400 Talence, France.\\
{ \tt(julien.mathiaud@u-bordeaux.fr)}}\\

\medskip
{\small $^2$Univ. Bordeaux, Bordeaux INP, CNRS, IMB, UMR 5251, F-33400 Talence, France.\\
{ \tt(Luc.Mieussens@math.u-bordeaux.fr)}}

\medskip
{\small $^3$Institute of Space Systems, University of Stuttgart, Pfaffenwaldring 29, D-70569 Stuttgart,
Germany.\\
{ \tt(mpfeiffer@irs.uni-stuttgart.de))}}

\end{center}

\abstract{
We propose a new ES-BGK model for diatomic gases which allows for
translational-rotational and translational-vibrational energy
exchanges, as given by Landau-Teller and Jeans relaxation
equations. 
This model is consistent with the general definition of the vibrational and rotational collision numbers that are also commonly used in DSMC solvers.
It is proved to satisfy the H-theorem and to give
the correct transport coefficients, up to the volume viscosity.}

\tableofcontents

\section{Introduction}
In Rarefied Gas Dynamic problems, it is often useful to replace the
complicated Boltzmann equation by simplified models, both for
analytical calculations and numerical simulations. These model
equations describe intermolecular collisions by drift and diffusion in
the velocity space (Fokker-Plank models,
see~\cite{cercignani,Grj2011,Mathiaud2016}) or by relaxation to a
local equilibrium: this latter approach, first proposed by Bathnagar
et al.~\cite{bgk} and Welander~\cite{welander} leads to the so called
BGK equation.

The BGK equation describes the evolution of a rarefied monoatomic
flow, and is designed to satisfy several properties of the Boltzmann
equation, like conservation laws, H-theorem, and correct shear
viscosity coefficient in the compressible Navier-Stokes asymptotics as
obtained by the Chapman-Enskog expansion. However, the BGK equation
contains a single free parameter (the relaxation time) which is not
sufficient to independently fit the correct value of the heat transfer
coefficient, which leads to a constant Prandtl number equal to
1.

Several modifications of the BGK equation have been proposed to fix
this problem, like the Ellipsoidal-Statistical (ES-BGK)~\cite{holway}
and the Shakhov~\cite{S_model} models that are
the most popular (see~\cite{struchtrup,liu} for other models). While
the ES-BGK model was already extended to polyatomic gases
in~\cite{holway}, this is the more recent extension of Andriès et
al.~\cite{ALPP} to polyatomic gases with rotational energy that is
mostly used in the litterature~\cite{ABLP,KKA_2019}. In the same paper, the authors also proved for the first
time that the ES-BGK model satisfies the H-theorem. While the Shakhov
model has also been extended to polyatomic
gases~\cite{R_model,WYLX_2017}, it cannot satisfy the H-theorem, since
it is a perturbative model in which the distribution function can take
negative values. Several extensions of the BGK equation have
also been proposed for discrete internal energy
levels~\cite{Morse_1964,bisi2016,DMM_2021} or thermally perfect gases~\cite{MM2021}.

In~\cite{DMM_2021}, the approach of Andriès et al.~\cite{ALPP} was
applied to extend the ES-BGK model to diatomic gases in which a
discrete vibrational energy is taken into account. This model was
designed to obtain the correct Prandtl number, as well as the correct
relaxation times of internal energies, as defined by Landau-Teller and
Jeans equations. However, first simulations~\cite{BBDMM_2022} show some
discrepancies with DSMC simulations, especially for the rotational and
vibrational temperature profiles, which suggests that energy exchanges
are not taken into account in the same way in the ES-BGK model and in
the DSMC solver.

Recently, Pfeiffer~\cite{pfeiffer2018} proposed an ES-BGK based
particle simulation of diatomic rarefied flows in which he proposed
a specific treatment of internal energy exchanges. His results
show very good agreement with DSMC. However,
the algorithm used in~\cite{pfeiffer2018} is not derived from a complete
kinetic model.

In this paper, we propose an ES-BGK model which is consistent with
the numerical method of~\cite{pfeiffer2018}, and based on the
theoretical framework of~\cite{DMM_2021}. The main modifications with
respect to the model of~\cite{DMM_2021} are the following
ingredients, taken from~\cite{pfeiffer2018}:
\begin{enumerate}
\item The energy relaxation time scale is proportional to the mean
  collision time $\tauc$ rather than to the relaxation time $\tau$, as
  opposed to what is done in~\cite{DMM_2021}. This corresponds to the common definition of the relaxation time of inner degrees of freedom and the associated definition of the vibrational and rotational collision numbers~\cite{parker1959,BS_2017,Haas1994}. 
\item The Landau-Teller and Jeans equations are used to define
  translational-rotational and translational-vibrational energy
  exchanges, and induce a relaxation of rotational and vibrational
  temperatures to the {\it translational} temperature, as described and discussed in detail in Haas et al.~\cite{Haas1994} for the DSMC method. Again,
  this is different to what is done in~\cite{DMM_2021}, where the
  model induces a relaxation of internal temperatures to the overall
  temperature.
\end{enumerate}
Numerical tests in space homogeneous cases illustrate the excellent
agreement between our new model and DSMC simulations.

Note that this new approach can also be used even if the vibration
modes are not taken into account: we obtain an ES-BGK model for
diatomic gases in rotational non-equilibrium which is different from
the ES-BGK model of Andriès et al.~\cite{ALPP}. However, both models are proved to
be equivalent up to a correction factor of the relaxation time, or
equivalently of the collision number $Z_{rot}$, but this correction
factor can be quite large.

Moreover, our new ES-BGK model is proved to satisfy the H-theorem,
with a proof that is more involved than that for~\cite{DMM_2021}. A
Chapman-Enskog expansion gives the corresponding transport
coefficients, and we obtain the following strong result: the volume
viscosity is shown to be the same as that obtained in the Boltzmann
equation with two fast and slow energy modes.

The outline of our paper is the following. Sections~\ref{sec:thermo}
and~\ref{sec:dist_funct} are devoted to the definition of internal
energies and temperature, relaxation times, and distribution
functions. Our ES-BGK model is derived and analyzed in Section~\ref{sec:
  esbgk}. The results of Chapman-Enskog expansion is given in
Section~\ref{sec: chap}. A reduced model is proposed in
Section~\ref{sec: r_esbgk} to reduce its computational
complexity. Finally, the properties of our model are illustrated by
some numerical results in section~\ref{sec: num}.

\section{Internal energies of  diatomic gases }
\label{sec:thermo} 

\subsection{The different macroscopic internal energies at equilibrium}

In this paper we consider diatomic perfect gases for which each
molecule has several degrees of freedom: translation, rotation and
vibration. At the macroscopic level, a gas in thermodynamical equilibrium
at temperature $T$ has different specific energies associated to each
mode. For translational, rotational and vibrational (in case of the harmonic oscillator
model) modes, the corresponding specific energies are 
\begin{equation} \label{eq-etr_erot} 
e_{tr}(T)=\frac32 RT, \quad \quad
e_{rot}(T)=\frac{\delta}2 RT, \quad  \quad
     e_{vib}(T)=\frac{RT_0}{\exp\left(T_0/T\right)-1}, 
\end{equation}
where the specific total energy is
\begin{equation}  \label{eq-e}
e(T) = e_{tr}(T)+e_{rot}(T)+e_{vib}(T).
\end{equation}
Here, $\delta=2$ is the number of degrees of freedom of rotation, $R$
is the gas constant per unit mass and $T_0$ is the characteristic
vibrational temperature.

Note that polyatomic molecules could be considered here with $\delta>2$ and a vibration energy as given by a sum over all harmonic oscillators of the molecule~\cite{pfeiffer2019extension}. However, this would change some details in our mathematical proofs, so that an extension of our approach to polyatomic molecules is deferred to future work.

\subsection{Mathematical properties of the energy functions}
\label{subsec:prop_energies}

For each energy mode, a temperature can be defined as follows. We denote by $e_i^{-1}$ the function that maps any given energy $E$ to
the corresponding temperature. That is to say the temperature $T$
corresponding to a given energy $E$ is such that $e_{\alpha}(T) = E$, where
$\alpha$ stands for $tr$, $rot$, and $vib$, and is denoted by $T=
e_{\alpha}^{-1}(E)$. Simple computations give
\begin{equation}\label{eq-inv_etrrotvib} 
  e_{tr}^{-1}(E) = \frac{2}{3R}E, \qquad  
 e_{rot}^{-1}(E) =  \frac{2}{\delta R}E, \qquad 
 e_{vib}^{-1}(E) = {T_0}/ {\log\left( 1+\frac{RT_0}{E}
   \right)}.
\end{equation}
The total energy function, which is clearly invertible, cannot be
inverted analytically, and we simply set
\begin{equation}\label{eq-inv_e} 
 T=  e^{-1}(E) \quad \text{ such that } \quad E = \frac{3+\delta}{2}RT  + \frac{RT_0}{\exp\left(T_0/T\right)-1}.
\end{equation}

For each energy mode, we can also define a specific heat at constant
volume $c_v^{\alpha}(T)  = \frac{d e_{\alpha}(T)}{dT}$. For
translational and rotational energies, the specific heats are constant:
\begin{equation}  \label{eq-cvtrrot}
\cvtr = \frac{3}{2}R, \qquad \cvrot = \frac{\delta}{2}R,
\end{equation}
while for vibrational energy, we find
\begin{equation}  \label{eq-icvib}
 \cvvib(T)= R\frac{T_0^2}{T^2}\frac{\exp(T_0/T)} {\left( \exp(T_0/T)-1 \right)^2}.
\end{equation}
Note that $\cvvib$ can be proved to be an increasing function of $T$
which is bounded by $R$. This also implies that $\evib$ is a convex
function (see appendix~\ref{app:evibconvex}).

Finally, we also define for each mode a specific entropy $s_{\alpha}$
such that $\frac{ds_{\alpha}(E)}{dE} =
\frac{1}{e^{-1}_\alpha(E)}$. This gives, up to any arbitrary constant
\begin{equation}  \label{eq-entropies}
s_{tr}(E) =  \frac32 R \log E, \quad s_{rot}(E) =  \frac{\delta}{2} R
\log E, \quad
s_{vib}(E) = R \Bigl( 
  \log (1+\frac{E}{RT_0}) 
  + \frac{E}{RT_0}\log(1+\frac{RT_0}{E})
\Bigr),
\end{equation}
and we define the total entropy (at constant density)
\begin{equation}\label{eq-S} 
  {\cal S}(E_1,E_2,E_3) = s_{tr}(E_1) + s_{rot}(E_2) + s_{vib}(E_3).
  \end{equation}

\section{Distribution functions, moments, and temperatures}
\label{sec:dist_funct} 

\subsection{Distribution function}

The state of any gas molecule is described by its position $x$,
its velocity $v$, its rotational energy $\varepsilon$, and its
discrete vibrational energy $iRT_0$, where $i$ is the
$i$th vibrational energy level and $T_0$ is the characteristic
vibrational temperature of the gas, in the case of the usual simple harmonic
oscillator model.

The distribution function of the gas is the mass density
$f(t,x,v,\varepsilon,i)$ of molecules that at time
$t$ are located in a elementary volume $dx$ centered in $x$, have the
velocity $v$ in a elementary volume $dv$, have the rotational energy
$\varepsilon$ centered in $d\varepsilon$ and the discrete vibrational
energy $iRT_0$. The macroscopic densities of mass $\rho$, momentum $\rho u$, and
internal energy $\rho E$ are
\begin{equation}\label{eq-mtsf} 
\rho=\la f \ra_{v,\varepsilon,i},\qquad \rho u =\la v f \ra_{v,\varepsilon,i},\qquad \rho E(f)=\la \left(\frac{1}{2}|v-u|^2+\varepsilon+iRT_0\right)f \ra_{v,\varepsilon,i}.
\end{equation}
The dependence of $E$ on $f$
is intentionally made explicit, and we denote by $\la \phi
\ra_{v,\varepsilon,i}(t,x)=\sum_{i=0}^{+\infty}\int_{\mathbb{R}^3}\int_{\mathbb{R}}\phi(t,x,v,\varepsilon,i)d\varepsilon
dv$ the integral of any function $\phi$.
The specific internal energy $E(f)$ can be decomposed
into
\begin{equation}  \label{eq-EEtrErotEvib}
E(f) = E_{tr}(f) +  E_{rot}(f) + E_{vib}(f),
\end{equation}
where specific energies $E_{tr}(f)$, $E_{rot}(f)$  and $E_{vib}(f)$ are respectively  associated with  translational motion of
particles,  rotational mode and  vibrational mode through:
\begin{equation}\label{eq-Etrrotvib} 
\rho E_{tr}(f)=\la \frac{1}{2}|v-u|^2 f \ra_{v,\varepsilon,i},\qquad 
\rho E_{rot}(f)=\la \varepsilon f \ra_{v,\varepsilon,i},\qquad
\rho E_{vib}(f)=\la iRT_0 f \ra_{v,\varepsilon,i}.
\end{equation}

We also define the pressure tensor $\PSigma(f)$ and the heat flux $q(f)$ by
\begin{equation}\label{eq-Theta_q} 
  \PSigma(f) =\la (v-u)\otimes (v-u) f \ra_{v,\varepsilon,i} \quad,
  \quad 
  q(f)=\la \left(\frac{1}{2}|v-u|^2 + \varepsilon + iRT_0\right) (v-u) f \ra_{v,\varepsilon,i}
\end{equation}
and we denote by $\Theta$ the tensor such that $\PSigma(f) =
  \rho \Theta$. 

\subsection{Internal temperatures}
\label{subsec:int_temp}

For a given distribution function $f$, the translational, rotational, and
vibrational temperatures are defined by
\begin{equation}\label{eq-T_tr_rot_vib} 
  T_{tr} = e_{tr}^{-1}(E_{tr}(f)), \quad 
  T_{rot} = e_{rot}^{-1}(E_{rot}(f)), \quad 
  T_{vib} = e_{vib}^{-1}(E_{vib}(f)), 
\end{equation}
so that we have
\begin{equation}\label{eq-E_T} 
E_{tr}(f)=\etr(\Ttr) = \frac{3}{2}RT_{tr},\quad 
E_{rot}(f)=\erot(\Trot) = \frac{\delta}{2}RT_{rot},\quad 
E_{vib}(f)=\evib(\Tvib) = \frac{RT_0}{\exp(T_0/T_{vib})-1},
\end{equation}
see section~\ref{sec:thermo}. A number
of degrees of freedom $\delta_v(T_{vib})$ for the vibration mode can
be defined such that $\ds E_{vib}(f) =
\frac{\delta_v(T_{vib})}{2}RT_{vib}$, which leads to
\begin{equation}\label{eq-deltav} 
 \delta_v(T_{vib})=\frac{2T_0/T_{vib}}{\exp(T_0/T_{vib})-1}.
\end{equation}
This number is not an integer, is temperature dependent, and tends to $2$ for large
$T_{vib}$.

The overall or equilibrium temperature $T_{eq}$  is the temperature corresponding
to the total internal energy, that is to say 
\begin{equation}\label{eq-Teq} 
  T_{eq}  = e^{-1}(E(f)),
\end{equation}
and $\Teq$ can be obtained by numerically solving
\begin{equation}\label{eq-ETeq} 
  E(f) = \frac{3+\delta}{2}RT_{eq}  + \frac{RT_0}{\exp\left(T_0/T_{eq}\right)-1}.
\end{equation}

\subsection{Macroscopic relaxation phenomena \label{sec: energ}}
The common description of the relaxation of internal energies with Jeans and
Landau-Teller equations~\cite{parker1959} as also typically used in DSMC codes (see~\cite{Haas1994,BS_2017,pfeiffer2018}) is given as:
\begin{align} 
\frac{d}{dt}
&  e_{rot}(T_{rot})=\frac{1}{Z_{rot}\tauc}(e_{rot}(T_{tr})-e_{rot}(T_{rot})),
\label{eq-LTerot}    \\
\frac{d}{dt}
&  e_{vib}(T_{vib})=\frac{1}{Z_{vib}\tauc}(e_{vib}(T_{tr})-e_{vib}(T_{vib})),\label{eq-LTevib}
\end{align}
 where $Z_{rot}$ and $Z_{vib}$ are the mean number of
 collisions necessary to have an exchange of rotational and vibrational
 energy, respectively, and $\tauc$ is a characteristic time of collision (see
 section~\ref{subsec:chartime}). The equation for translational energy is
\begin{equation}\label{eq-LTetr} 
\frac{d}{dt} e_{tr}(T_{tr})=-\frac{(e_{rot}(T_{tr})-e_{rot}(T_{rot}))}{Z_{rot}\tauc}-\frac{e_{vib}(T_{tr})-e_{vib}(T_{vib})}{Z_{vib}\tauc}.
\end{equation} 
which is deduced from the conservation of total energy. Our ES-BGK
model will be designed to satisfy these relaxation equations.

\begin{remark}
  These equations are different from those used
  in~\cite{DMM_2021}. Indeed, first, they induce a relaxation of
  $\Trot$ and $\Tvib$ to $\Ttr$, while a relaxation to the overall
  temperature $\Teq$ was imposed in~\cite{DMM_2021}, and second the
  relaxation time used here is the collision time $\tauc$, while the ES-BGK
  relaxation time $\tau$ was used in~\cite{DMM_2021} (see
  sections~\ref{subsec:chartime} and~\ref{subsec:compAndries}). The use of $\tauc$ and the relaxation to translation temperature $\Ttr$ instead of equilibrium temperature $\Teq$ corresponds to the most frequently used definition of the Landau-Teller and Jeans equation. A detailed discussion of the relaxation to the translation temperature instead of the equilibrium temperature can be found in Haas et al.~\cite{Haas1994}.
\end{remark}

\subsection{Some remarks on  the  collision time  $\tauc$} \label{subsec:chartime}

The collision numbers $Z_{rot}$ and $Z_{vib}$ describe the average required number of collisions of the gas during which it undergoes a relaxation process in the rotational and vibratory degrees of freedom, respectively. 
Therefore, the characteristic time $\tauc$ should be chosen at the
mean collision time of the gas \cite{pfeiffer2018} and is generally
not equal to the relaxation time of the ES-BGK model which is chosen to
represent the correct viscosity (as opposed to what is done
in~\cite{ALPP} and~\cite{DMM_2021}).
The difference between the relaxation time $\tau$ and the collision time $\tauc$ depends on the molecular model used. For example, if we look at the Variable Soft Sphere model (VSS) often used in DSMC, the collision time is given by~\cite{bird}:
\begin{equation}\label{eq-taucvss} 
\tauc^{VSS} = \frac{\alpha(5-2\omega)(7-2\omega)}{5(\alpha+1)(\alpha+2)}\frac{\mu}{p}= \frac{\alpha(5-2\omega)(7-2\omega)}{5(\alpha+1)(\alpha+2)}\tau Pr,    
\end{equation}
with $\alpha$ the diffusion factor of the VSS model, $\omega$ the
exponential factor of the temperature dependency in the viscosity, and
$\Pr$ is the Prandtl number. Here, we have used the usual relation
$\tau = \mu / (p\Prandtl)$ for ES-BGK, which will be proved below. The variable hard sphere (VHS) model can simply be achieved by setting $\alpha=1$ which gives:
\begin{equation}
\tauc^{VHS} = \frac{(5-2\omega)(7-2\omega)}{30}\frac{\mu}{p}= \frac{(5-2\omega)(7-2\omega)}{30}\tau Pr.
\end{equation}
And finally we get the HS model from it when $\omega=0.5$:
\begin{equation}
\tauc^{HS} = \frac{4}{5}\frac{\mu}{p} = \frac{4}{5}\tau Pr.
\end{equation}

\section{\label{sec: esbgk}ES-BGK model ant its mathematical properties }

\subsection{Construction of the model}\label{subsec:construct}

The evolution equation for $f$ is the Boltzmann equation
\begin{equation}
\label{eq: cinetique}
\partial_t f+v \cdot\nabla f=Q(f),
\end{equation}
where $Q(f)$ is the collision operator
(see~\cite{Giovangigli1999}). The corresponding local Maxwellian equilibrium in
velocity and energy is defined by
\begin{equation}\label{eq-Mf} 
\mathcal{M}[f](v,\varepsilon,i)= \mathcal{M}_{\mo}[f](v)\mathcal{M}_{rot}[f](\varepsilon)\mathcal{M}_{vib}[f](i),
\end{equation}
with
\begin{align*}
& \mathcal{M}_{\mo}[f](v)=\frac{\rho}{(2\pi
                 RT_{eq})^{3/2}}\exp\left(-\frac{|v-u|^2}{2RT_{eq}}
                 \right), \qquad 
 \mathcal{M}_{rot}[f](\varepsilon)=\frac{\Lambda(\delta)\varepsilon^{\frac{\delta-2}{2}}}{(R T_{eq})^{\delta/2}}\exp\left( -\frac{\varepsilon}{R T_{eq}} \right),
\\
& {M}_{vib}[f](i)=\left(1-\exp(-T_0/T_{eq})\right)\exp\left(-i\frac{T_0}{T_{eq}} \right),
\end{align*}
where $\Lambda(\delta) = 1/\Gamma(\frac{\delta}{2})$, with
$\Gamma$ the usual gamma function.

This Maxwellian distribution can be used to define the BGK
approximation~\cite{Mathiaud2019}, where $Q(f)$ is replaced by
$\frac{1}{\tau}(\mathcal{M}[f] - f) $, where $\tau$ is a relaxation
time. This approximation has the same conservation and entropy
properties as the original Boltzmann operator, but is simpler for deterministic
numerical simulations. However, the single relaxation time cannot account
for the various time scales of the original
problem. Indeed, such a model gives the same value for rotational and
vibrational relaxation times, and the same value for relaxation times
of viscous and thermal fluxes, leading to the usual incorrect Prandtl number $\Prandtl =
1$.

Additional relaxation times can be added in this model by using the
ES-BGK approach exposed in~\cite{DMM_2021}: the idea is to modify the
equilibrium temperature $\Teq$ in $\Mtr$, $\Mrot$, and $\Mvib$ so as
to obtain the correct relaxation times. Indeed, our ES-BGK collision
operator is
\begin{equation}
\label{eq: operator} 
Q(f)=\frac{1}{\tau}(\G[f]-f),
\end{equation}
with $\G[f](v,\eps,i) = \Gtr[f](v)\Grot[f](\eps)\Gvib[f](i)$, where
\begin{equation}\label{eq-Gaussalpha} 
\begin{split}
& \Gtr[f](v) =\frac{\rho}{  \sqrt{\det(2\pi \Pi)}}  \exp\left(-\frac{1}{2} (v-u)^{T}\,\Pi^{-1}\,(v-u))\right),\\
& \Grot[f](\varepsilon)=\frac{\Lambda(\delta)}{(R
  {T_{rot}^{rel}})^{\delta/2}}\varepsilon^{\frac{\delta-2}{2}}\exp\left(
  -\frac{\varepsilon}{R T_{rot}^{rel}}\right), \\
& \Gvib[f](i)=(1-\exp(-T_0/T_{vib}^{rel})) \exp\left(-i\frac{T_0}{T_{vib}^{rel}}\right),
\end{split}
\end{equation}
are distributions associated to the energies of
translation, rotation and vibration of the molecules. The relaxation
tensor $\Pi$ and temperatures $\Trotrel$ and $\Tvibrel$ are defined as
follows.

First, note the following integral properties
\begin{align} 
&  \int_{\R^3} \Gtr[f](v) \, dv = \rho,  \quad
    \int_{\R^3}v \Gtr[f](v)\, dv = \rho u, \quad
    \int_{\R^3}(v-u) \otimes (v-u) \Gtr[f](v)\, dv  = \rho \Pi
                 \label{eq-intGtr} \\              
  & \int_0^{+\infty} \Grot[f](\eps) \, d\eps = 1, \qquad
    \int_0^{+\infty} \eps \Grot[f](\eps) \, d\eps =\erot(\Trotrel),\label{eq-intGrot} \\
  & \sum_{i=0}^{+\infty} \Gvib[f](i) = 1, \qquad
    \sum_{i=0}^{+\infty} iRT_0 \Gvib[f](i) = \evib(\Tvibrel). \label{eq-intGvib}
\end{align}

Now, $\Trotrel$ and $\Tvibrel$ are defined so that our ES-BGK
model~\eqref{eq: cinetique}--\eqref{eq-Gaussalpha} satisfies (in the
space homogeneous case) the Landau-Teller and Jeans
equations~\eqref{eq-LTerot}--\eqref{eq-LTevib}. This gives
\begin{align}
  \erot(\Trotrel) & =  \erot(\Trot) + \frac{\tau}{\Zrot \tauc}
                    (\erot(\Ttr) - \erot(\Trot)),\label{eq-Trotrel}
  \\
  \evib(\Tvibrel) & = \evib(\Tvib) + \frac{\tau}{\Zvib \tauc}
                    (\evib(\Ttr) - \evib(\Tvib)),\label{eq-Tvibrel}
  \end{align}

We also need a relaxation translational temperature $\Ttrrel$, defined by
\begin{equation}  \label{eq-Ttrrel}
\rho \etr(\Ttrrel)  = \int_{\R^3}\demi |v-u|^2 \Gtr[f](v)\, dv 
\end{equation}
which reads $\etr(\Ttrrel) = \demi \Trace(\Pi)$, or equivalently
$\Ttrrel = \frac{1}{3R}\Trace(\Pi)$. Then, the conservation of total
energy of our model requires
$\la (\demi |v-u|^2 +\eps + iRT_0) \G[f] \ra_{v,\varepsilon,i} = \rho
E(f)$, which gives the following definition of $\Ttrrel$:
\begin{equation}  \label{eq-Ttrrel2}
  \etr(\Ttrrel) = \etr(\Ttr)
  -\frac{\tau}{\Zrot \tauc} (\erot(\Ttr) - \erot(\Trot))
  -\frac{\tau}{\Zvib \tauc} (\evib(\Ttr) - \evib(\Tvib))  .
\end{equation}

Now, the relaxation tensor $\Pi$ is defined as follows. In the
homogeneous case, our ES-BGK model makes the heat flux relax exponentially fast to 0 with
relaxation time $\tau$. We impose that the deviation
of $\Theta$ to its trace value $R\Ttr I$ relaxes to zero too,
with relaxation time $\tau \Pr$. This gives
\begin{equation}  \label{eq-defPi}
\Pi=
 R \Ttrrel I 
+ \frac{\Prandtl-1}{\Prandtl} ( \Theta - R \Ttr I ).
  \end{equation}

The relaxation time $\tau$ is defined so that our ES-BGK
model is consistent with the compressible Navier-Stokes equations with
shear viscosity $\mu$ (see section~\ref{sec: chap}): this gives
\begin{equation}\label{eq-deftau} 
  \tau = \frac{\mu} {\rho R \Ttr \Prandtl}.
\end{equation}
Moreover, note that a temperature power law dependence of $\mu$ is generally
chosen, which is related to the intermolecular collision model of the
Boltzmann equation (see~\cite{bird} for instance).

Finally, note that collision numbers $\Zrot$ and $\Zvib$ might be temperature
dependant (models of Parker and Millikan-White): in this case,
they have to be defined at the
translational temperature $\Ttr$. In the same way, $\mu$ should also
be defined  at $\Ttr$ in~\eqref{eq-deftau}, so that $\tau$ depends on
$\Ttr$, like $\tauc$. However, to make notations simpler, the
dependence on $\Ttr$ of $\Zrot$, $\Zvib$, $\tau$, and $\tauc$ is not
made explicit in the remaining of this paper.

\subsection{Definition of the model}

Our model is not always well defined: indeed, it requires that the
relaxation energies are positive, and that the relaxation tensor $\Pi$
is positive definite. These constraints are analyzed in the following
two propositions, where it is shown that they depend on the
translational temperature only via values of $\Zrot$, $\Zvib$, $\tau$,
$\tauc$, and $\cvvib$.

\begin{proposition}[Positiveness of relaxation energies] \label{prop:pos_ener}
  For positive $\Ttr$, $\Trot$ and $\Tvib$, the relaxation energies defined by~\eqref{eq-Trotrel}, \eqref{eq-Tvibrel},
  and~\eqref{eq-Ttrrel2}, are positive if
  \begin{equation}\label{eq-erelpositive} 
    \frac{\tau}{\Zrot\tauc}<1, \qquad 
    \frac{\tau}{\Zvib\tauc}<1, \qquad \text{ and } \qquad 
    \frac{\tau}{\Zrot\tauc}\frac{\cvrot}{\cvtr}
    +  \frac{\tau}{\Zvib\tauc} \frac{\cvvib(\Ttr)}{\cvtr}   < 1.
  \end{equation}
\end{proposition}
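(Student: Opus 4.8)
The plan is to handle the three relaxation energies one at a time, since the rotational and vibrational cases are immediate and the translational case carries all the content. Abbreviating $a = \tau/(\Zrot\tauc)$ and $b = \tau/(\Zvib\tauc)$, I would first rewrite~\eqref{eq-Trotrel} as $\erot(\Trotrel) = (1-a)\,\erot(\Trot) + a\,\erot(\Ttr)$, exhibiting it as a linear combination of the strictly positive energies $\erot(\Trot)$ and $\erot(\Ttr)$. Since $a>0$ always and the first hypothesis gives $a<1$, both weights are positive, so $\erot(\Trotrel)>0$. The identical argument applied to~\eqref{eq-Tvibrel} yields $\evib(\Tvibrel)>0$ under $b<1$.

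For the translational energy~\eqref{eq-Ttrrel2}, I would start from $\etr(\Ttrrel) = \etr(\Ttr) - a\,(\erot(\Ttr)-\erot(\Trot)) - b\,(\evib(\Ttr)-\evib(\Tvib))$ and seek a lower bound. Because $a,b>0$, positivity is hardest to maintain when the two bracketed differences are largest, i.e.\ as $\Trot,\Tvib\to 0^+$. Using $\Trot>0$ with $\erot(T)=\cvrot T$ gives $\erot(\Ttr)-\erot(\Trot)<\cvrot\Ttr$, and using $\Tvib>0$ with the monotonicity and positivity of $\evib$ (so $\evib(\Tvib)>0$) gives $\evib(\Ttr)-\evib(\Tvib)<\evib(\Ttr)$. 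Combining, I obtain $\etr(\Ttrrel) > \etr(\Ttr) - a\,\cvrot\Ttr - b\,\evib(\Ttr)$.

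The decisive step is to replace $\evib(\Ttr)$ by the majorant $\cvvib(\Ttr)\,\Ttr$, which is where the $\cvvib$ factor of the third hypothesis appears. For this I would use the convexity of $\evib$ recorded in Section~\ref{subsec:prop_energies} together with $\evib(0^+)=0$: for a convex differentiable $g$ with $g(0)=0$, the tangent inequality at $T$ evaluated at the origin reads $0=g(0)\ge g(T)-g'(T)\,T$, hence $g(T)\le g'(T)\,T$; applied to $g=\evib$ this is $\evib(\Ttr)\le\cvvib(\Ttr)\,\Ttr$. Substituting and factoring $\cvtr\Ttr$ out of $\etr(\Ttr)=\cvtr\Ttr$ turns the bound into $\etr(\Ttrrel) > \cvtr\Ttr\,\bigl(1 - a\,\cvrot/\cvtr - b\,\cvvib(\Ttr)/\cvtr\bigr)$, whose right-hand side is positive precisely under the third inequality of~\eqref{eq-erelpositive}.

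The main obstacle is thus concentrated in the translational case: one must first recognize $\Trot,\Tvib\to 0^+$ as the extremal configuration to get a bound uniform in $\Trot,\Tvib$, and then use convexity to trade the nonlinear quantity $\evib(\Ttr)$ for the linear-in-$\Ttr$ majorant $\cvvib(\Ttr)\,\Ttr$. Without the latter the sufficient condition would read $a\,\cvrot/\cvtr + b\,\evib(\Ttr)/(\cvtr\Ttr)<1$ rather than the sharper, temperature-cleaner form stated in~\eqref{eq-erelpositive}; the rotational and vibrational energies require no work beyond the convex-combination observation.
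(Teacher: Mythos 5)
Your argument is correct, and for the translational energy it takes a genuinely different route from the paper. The paper converts~\eqref{eq-Ttrrel2} into a statement about temperatures, applies the mean value theorem to write $\evib(\Ttr)-\evib(\Tvib)=\cvvib(T_1)(\Ttr-\Tvib)$, and then runs a four-way case analysis on the signs of $\Ttr-\Trot$ and $\Ttr-\Tvib$, concluding via the monotonicity of $\cvvib$ that $\cvvib(T_1)\leq\cvvib(\Ttr)$ in the restrictive case. You instead stay at the level of energies, bound each exchange term uniformly over $\Trot,\Tvib>0$ by its worst-case value at the origin ($\erot(\Ttr)-\erot(\Trot)<\cvrot\Ttr$ and $\evib(\Ttr)-\evib(\Tvib)<\evib(\Ttr)$), and then invoke the tangent-line inequality $\evib(\Ttr)\leq\cvvib(\Ttr)\,\Ttr$, which follows from the convexity of $\evib$ (established in appendix~\ref{app:evibconvex}) together with $\evib(0^+)=0$. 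Both proofs ultimately rest on the same analytic fact (convexity of $\evib$, equivalently monotonicity of $\cvvib$), but yours dispenses with the case analysis entirely and yields a single uniform lower bound $\etr(\Ttrrel)>\cvtr\Ttr\bigl(1-\tfrac{\tau}{\Zrot\tauc}\tfrac{\cvrot}{\cvtr}-\tfrac{\tau}{\Zvib\tauc}\tfrac{\cvvib(\Ttr)}{\cvtr}\bigr)$, which is arguably cleaner. What the paper's temperature-based formulation~\eqref{eq-cond1} buys in exchange is reusability: the same convex-combination-of-temperatures decomposition is recycled in the proof of proposition~\ref{pipositive} and again in Step~2 of the H-theorem proof, so the authors have reason to set it up here. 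The rotational and vibrational cases are handled identically in both proofs.
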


\begin{proof}
  The positivity of $\erot(\Trotrel)$ and $\evib(\Tvibrel)$ is
  obtained by writing relations~\eqref{eq-Trotrel} and
  \eqref{eq-Tvibrel} as linear combinations that are clearly strictly convex under the
  necessary and sufficient conditions $\frac{\tau}{\Zrot \tauc}<1$ and
  $\frac{\tau}{\Zvib \tauc}<1$.

For $\etr(\Ttrrel)$, we rewrite~\eqref{eq-Ttrrel2} as
\begin{equation*}
 \Ttrrel = \Ttr
 -\frac{\tau}{\Zrot \tauc} \frac{\cvrot}{\cvtr}(\Ttr - \Trot)
 -\frac{\tau}{\Zvib \tauc} \frac{1}{\cvtr} (\evib(\Ttr) - \evib(\Tvib)),
\end{equation*}
see~\eqref{eq-E_T} and~\eqref{eq-cvtrrot}. Then we use the mean value
theorem applied to the function $\evib$ to get
\begin{equation} \label{eq-cond1}
 \Ttrrel = \Ttr
 -\frac{\tau}{\Zrot \tauc} \frac{\cvrot}{\cvtr}(\Ttr - \Trot)
 -\frac{\tau}{\Zvib \tauc} \frac{\cvvib(T_1)}{\cvtr} (\Ttr - \Tvib),
\end{equation}
where $T_1$ lies between $\Ttr$ and $\Tvib$ and is such that
$\evib(\Ttr) - \evib(\Tvib) = \cvvib(T_1) (\Ttr-\Tvib)$, and we remind we have
used $\cvvib(T) =d\evib(T) / dT $.

Now, for the positiveness of $ \Ttrrel $, the most restrictive case is
when $\Ttr-\Trot\geq 0$ and $\Ttr-\Tvib\geq 0$, that we assume now.
Moreover, the positiveness of $ \Ttrrel $
and hence of $\etr(\Ttrrel)$, is obtained by writing~\eqref{eq-cond1}
as a linear combination of $\Ttr$, $\Trot$, and $\Tvib$ which
is strictly convex under the condition
\begin{equation}\label{eq-condetrelpos} 
  \frac{\tau}{\Zrot \tauc} \frac{\cvrot}{\cvtr}+\frac{\tau}{\Zvib
    \tauc} \frac{\cvvib(T_1)}{\cvtr}  <1 .
\end{equation}
Since $\cvvib$ is an increasing function (see
section~\ref{subsec:prop_energies}), and since we assumed
$\Ttr\geq\Tvib$, therefore $\cvvib(T_1)\leq \cvvib(\Ttr)$, which gives
the last condition of~\eqref{eq-erelpositive}.

For the other cases, it can easily be proved that this condition is
sufficient too: in the case ($\Ttr-\Trot\leq 0$ and
$\Ttr-\Tvib\leq 0$),~\eqref{eq-cond1} is always true, and in the cases
($\Ttr-\Trot\geq 0$ and $\Ttr-\Tvib\leq 0$) and ($\Ttr-\Trot\leq 0$
and $\Ttr-\Tvib\geq 0$), ~\eqref{eq-cond1} is true under conditions
$1
-\frac{\tau}{\Zrot\tauc}\frac{c_v^{rot}}{c_v^{tr}}\geq 0$ and
$1 -
\frac{\tau}{\Zvib\tauc}\frac{c_v^{vib}(\Ttr)}{c_v^{tr}} \geq 0$,
respectively.

\end{proof}

\begin{proposition}[Positiveness of tensor $\Pi$] \label{pipositive}
Let $\Ttr$, $\Trot$ and $\Tvib$ be three positive temperatures, and a Prandtl number
$\frac23<\Prandtl\leq1$. We assume~\eqref{eq-erelpositive} holds, then the tensor $\Pi$ defined by~\eqref{eq-defPi}
is positive definite under the assumption
  \begin{equation}\label{condpi}
  \frac{\tau}{Z_{rot}\tauc}\frac{c_v^{rot}}{c_v^{tr}}
  +  \frac{\tau}{Z_{vib}\tauc}\frac{c_v^{vib}(T_{tr})}{c_v^{tr}}
  <\frac{3}{\Prandtl}(\Prandtl- \frac23). 
\end{equation}
\end{proposition}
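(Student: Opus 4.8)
The plan is to exploit the fact that $\Pi$ is a scalar shift plus a scalar multiple of the symmetric tensor $\Theta$, so that $\Pi$ and $\Theta$ are simultaneously diagonalizable. First I would diagonalize $\Theta$: being (a scaling of) the covariance matrix of $f$, it is symmetric positive definite, with positive eigenvalues $\theta_1,\theta_2,\theta_3$ whose sum is $\Trace(\Theta) = 2\etr(\Ttr) = 3R\Ttr$. In that same orthonormal basis, the definition~\eqref{eq-defPi} shows $\Pi$ is diagonal with eigenvalues
\begin{equation*}
\pi_k = R\Ttrrel + \frac{\Prandtl-1}{\Prandtl}\,(\theta_k - R\Ttr), \qquad k=1,2,3,
\end{equation*}
so positive definiteness of $\Pi$ is equivalent to $\pi_k>0$ for each $k$.

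Next I would use the sign of the coefficient $\lambda := \frac{\Prandtl-1}{\Prandtl}$. For $\frac23<\Prandtl\le1$ one checks $\lambda\in(-\tfrac12,0]$, hence $\lambda\le0$, so each $\pi_k$ is a non-increasing function of $\theta_k$. The smallest eigenvalue is therefore attained at the largest $\theta_k$; since the $\theta_k$ are positive and sum to $3R\Ttr$, each satisfies $\theta_k<3R\Ttr$, which gives $\pi_k > R\Ttrrel + 2\lambda R\Ttr$. Consequently it suffices to secure the worst-case bound $R\Ttrrel + 2\lambda R\Ttr \ge 0$, that is $\Ttrrel \ge \frac{2(1-\Prandtl)}{\Prandtl}\,\Ttr$.

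The remaining step is to convert this into a condition on the relaxation parameters by inserting the expression for $\Ttrrel$ from~\eqref{eq-Ttrrel2}--\eqref{eq-cond1} and taking the worst case over the free temperatures $\Trot,\Tvib$. Dividing~\eqref{eq-Ttrrel2} by $\cvtr$ and using $\erot(\Ttr)-\erot(\Trot)=\cvrot(\Ttr-\Trot)$, I would write $\Ttrrel = \Ttr - \frac{\tau}{\Zrot\tauc}\frac{\cvrot}{\cvtr}(\Ttr-\Trot) - \frac{\tau}{\Zvib\tauc}\frac{1}{\cvtr}\big(\evib(\Ttr)-\evib(\Tvib)\big)$, whose right-hand side is minimized as $\Trot,\Tvib\to0$. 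The rotational term then contributes $-\frac{\tau}{\Zrot\tauc}\frac{\cvrot}{\cvtr}\Ttr$ and the vibrational term $-\frac{\tau}{\Zvib\tauc}\frac{\evib(\Ttr)}{\cvtr}$. Here I would invoke that $\cvvib$ is increasing with $\evib(0)=0$, so $\evib(\Ttr)=\int_0^{\Ttr}\cvvib(s)\,ds\le \cvvib(\Ttr)\,\Ttr$, yielding $\Ttrrel \ge \Ttr\big(1 - \frac{\tau}{\Zrot\tauc}\frac{\cvrot}{\cvtr} - \frac{\tau}{\Zvib\tauc}\frac{\cvvib(\Ttr)}{\cvtr}\big)$. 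Combining this with $\Ttrrel \ge \frac{2(1-\Prandtl)}{\Prandtl}\Ttr$ and simplifying $1 - \frac{2(1-\Prandtl)}{\Prandtl} = \frac{3}{\Prandtl}(\Prandtl-\frac23)$ produces exactly~\eqref{condpi}, with the strict inequality there propagating to strict positivity of each $\pi_k$.

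The main obstacle I anticipate is the careful worst-case bookkeeping in this last step: one must verify that the sign conventions genuinely make $\Trot,\Tvib\to0$ (rather than large) the extremal configuration, and one must replace the mean-value coefficient $\cvvib(T_1)$ of~\eqref{eq-cond1} by the uniform value $\cvvib(\Ttr)$ using the monotonicity and convexity of $\evib$ recorded in Section~\ref{subsec:prop_energies}. The degenerate case $\Prandtl=1$ (where $\lambda=0$ and $\Pi = R\Ttrrel I$) should be treated separately, where positivity of $\Ttrrel$ under~\eqref{eq-erelpositive}, established in Proposition~\ref{prop:pos_ener}, immediately gives the claim.
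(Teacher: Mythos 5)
Your proposal is correct and follows essentially the same route as the paper: simultaneous diagonalization of $\Pi$ and $\Theta$, the bound $\lambda_i(\Theta)\leq 3R\Ttr$ yielding the worst-case eigenvalue $R\Ttrrel+2\frac{\Prandtl-1}{\Prandtl}R\Ttr$, and then bounding $\Ttrrel$ from below using the monotonicity of $\cvvib$ (the paper phrases this via the mean value theorem and $\cvvib(T_1)\leq\cvvib(\Ttr)$ rather than your integral bound $\evib(\Ttr)\leq\cvvib(\Ttr)\Ttr$, but these are the same estimate). No gaps.
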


\begin{proof}
First, note that $\Pi$ and $\Theta$ have the same eigenvectors, and
hence relation~\eqref{eq-defPi} written in this eigenvector basis
reads
\begin{equation*}
  \lambda_i(\Pi) = R \Ttrrel + (1-\frac{1}{\Prandtl})(\lambda_i(\Theta)-R\Ttr),
\end{equation*}
where $\lambda_i(\Pi)$ and $\lambda_i(\Theta)$ are the eigenvalues of
$\Pi$ and $\Theta$ for $i=1$, $2$, $3$. By~\eqref{eq-T_tr_rot_vib}
and~\eqref{eq-Theta_q}, we have
$R\Ttr = \frac13 (\lambda_1(\Theta)+
\lambda_2(\Theta)+\lambda_3(\Theta))$, and since the
$\lambda_i(\Theta)$ are positive (note that~\eqref{eq-Theta_q} implies
$\Theta$ is positive definite), we get
$\lambda_i(\Theta) \leq 3 R\Ttr$. Finally, the assumption
$\Prandtl \leq 1$ implies
\begin{equation}\label{eq-minlambdaiPi} 
   \lambda_i(\Pi) \geq  R \Ttrrel + (1-\frac{1}{\Prandtl})2R\Ttr.
 \end{equation}
Consequently, a sufficient condition for $\Pi$ to be positive definite is that
 the right-hand side of~\eqref{eq-minlambdaiPi} is positive.

 Now, we inject the expression of $\Ttrrel$~\eqref{eq-cond1}
 into~\eqref{eq-minlambdaiPi}, and we find that the right-hand side
 of~\eqref{eq-minlambdaiPi} is positive if
\begin{equation}  \label{eq-cond2}
  \frac{3}{\Prandtl} (\Prandtl - \frac23) \Ttr
  -\frac{\tau}{\Zrot\tauc}\frac{c_v^{rot}}{c_v^{tr}}(\Ttr - \Trot)
  - \frac{\tau}{\Zvib\tauc}\frac{c_v^{vib}(T_1)}{c_v^{tr}}(\Ttr - \Tvib) \geq 0.
\end{equation}
The same analysis as for the proof of proposition~\ref{prop:pos_ener}
gives the final condition~\eqref{condpi}.

\end{proof}

\begin{remark}
  Condition~\eqref{condpi} is clearly not optimal, since the
  directional temperatures $\lambda_i(\Theta)/R$ are generally close
  to $\Ttr$ and the non zero values of $\Trot$ and $\Tvib$ help in
  getting~\eqref{eq-cond2} (see~\cite{Mathiaud2016} for an optimal
  condition obtained in the monoatomic case).
\end{remark}

\subsection{Conservation properties}
\label{subsec:conservation}
\begin{proposition}
The collision operator~\eqref{eq: operator} of the ES-BGK model satisfies the conservation
of mass, momentum, and energy:
\begin{equation}\label{conserv}
  \la (1,v,\frac12 |v-u|^2+ \varepsilon + iRT_0)
\frac{1}{\tau}({\cal G}[f]-f) \ra_{v,\varepsilon,i}  = 0.
\end{equation}
\end{proposition}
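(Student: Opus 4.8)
The plan is to exploit the fact that $\tau$ does not depend on the microscopic variables $(v,\eps,i)$, so proving~\eqref{conserv} is equivalent to showing that $\G[f]$ carries the same moments as $f$ against the three collision invariants $1$, $v$, and $\demi|v-u|^2+\eps+iRT_0$; that is, $\la\G[f]\ra_{v,\varepsilon,i}=\rho$, $\la v\,\G[f]\ra_{v,\varepsilon,i}=\rho u$, and $\la(\demi|v-u|^2+\eps+iRT_0)\G[f]\ra_{v,\varepsilon,i}=\rho E(f)$, since the moments of $f$ are by definition~\eqref{eq-mtsf} exactly $\rho$, $\rho u$, and $\rho E(f)$. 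Throughout I would use the tensor-product structure $\G[f]=\Gtr[f]\,\Grot[f]\,\Gvib[f]$, which factorizes each moment over $v$, $\eps$, and $i$ into a product of the one-dimensional and discrete integrals already computed in~\eqref{eq-intGtr}--\eqref{eq-intGvib}.

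Mass and momentum are immediate. For mass, factorization gives $\la\G[f]\ra_{v,\varepsilon,i}=\bigl(\int_{\R^3}\Gtr\,dv\bigr)\bigl(\int_0^{+\infty}\Grot\,d\eps\bigr)\bigl(\sum_i\Gvib\bigr)=\rho\cdot1\cdot1=\rho$, using the normalizations in~\eqref{eq-intGtr}--\eqref{eq-intGvib}. For momentum, only the velocity factor is nontrivial, so $\la v\,\G[f]\ra_{v,\varepsilon,i}=\bigl(\int_{\R^3}v\,\Gtr\,dv\bigr)\cdot1\cdot1=\rho u$ by the second relation of~\eqref{eq-intGtr} together with the unit mass of the rotational and vibrational factors.

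The one step carrying actual content is energy. Splitting the invariant into its three pieces and using the product structure, the contributions decouple into $\int_{\R^3}\demi|v-u|^2\,\Gtr\,dv$, $\int_0^{+\infty}\eps\,\Grot\,d\eps$, and $\sum_i iRT_0\,\Gvib$. Taking the trace of the third identity in~\eqref{eq-intGtr} and invoking definition~\eqref{eq-Ttrrel}, the first equals $\demi\rho\,\Trace(\Pi)=\rho\,\etr(\Ttrrel)$; by~\eqref{eq-intGrot} and~\eqref{eq-intGvib} the other two equal $\rho\,\erot(\Trotrel)$ and $\rho\,\evib(\Tvibrel)$. Hence the total energy of $\G[f]$ is $\rho\bigl(\etr(\Ttrrel)+\erot(\Trotrel)+\evib(\Tvibrel)\bigr)$, and it remains to identify this with $\rho E(f)=\rho\bigl(\etr(\Ttr)+\erot(\Trot)+\evib(\Tvib)\bigr)$.

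To close the argument I would simply add the three defining relations~\eqref{eq-Trotrel}, \eqref{eq-Tvibrel}, and~\eqref{eq-Ttrrel2}: the correction terms $\frac{\tau}{\Zrot\tauc}(\erot(\Ttr)-\erot(\Trot))$ and $\frac{\tau}{\Zvib\tauc}(\evib(\Ttr)-\evib(\Tvib))$ enter the rotational and vibrational relations with one sign and the translational relation with the opposite sign, so they cancel and leave exactly $\etr(\Ttr)+\erot(\Trot)+\evib(\Tvib)$. This is where the whole proposition rests, and it is precisely why I expect no genuine obstacle: relation~\eqref{eq-Ttrrel2} was built so that $\Ttrrel$ absorbs whatever energy the internal modes gain through~\eqref{eq-Trotrel}--\eqref{eq-Tvibrel}. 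The only thing to recognize, rather than to overcome, is that energy conservation is not an extra condition to verify but is encoded in the definition of $\Ttrrel$ by this telescoping.
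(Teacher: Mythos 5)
Your proposal is correct and follows exactly the route the paper sketches: it reduces the claim to the integral relations~\eqref{eq-intGtr}--\eqref{eq-intGvib} via the product structure of $\G[f]$, and then observes that the correction terms in the definitions~\eqref{eq-Trotrel}, \eqref{eq-Tvibrel}, and~\eqref{eq-Ttrrel2} cancel to give $\rho E(f)$. You have simply written out in full the verification the paper leaves as ``a simple consequence of the definitions.''
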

\begin{proof}
This is a simple consequence of the definition of the relaxation variables $T_{\alpha}^{rel}$ and $\Pi$
(see~(\ref{eq-Trotrel}--\ref{eq-defPi})), and of the integral relations~(\ref{eq-Gaussalpha}--\ref{eq-intGvib}).
\end{proof}

\subsection{\label{sec: entropy}Entropy}

The use of a single rotational energy with $\delta$ degrees of freedom
requires to define the Boltzmann entropy functional as
\begin{equation*}
   \ent(f) = \langle f\log (f/\varepsilon^{\frac{\delta}{2}-1}) -f\rangle_{v,\varepsilon,i}.
 \end{equation*}
For any macroscopic values $(\rho,u,\Theta,\Trot,\Tvib)$, we define the following set of distribution functions that realizes these values, namely
 \begin{equation}\label{eq-setX} 
      \begin{split}
     {\cal X}_{\rho,u,\Theta,\Trot,\Tvib} =
     \lbrace
   & \phi\geq 0, \quad 
    \la (1+|v|^2+\varepsilon+i+|\log(\phi/\varepsilon^{\frac{\delta}{2}-1})|)\phi\ra_{v,\varepsilon,i}<+\infty, \\
&   \la
     (1,v,(v-u) \otimes (v-u),\varepsilon, iRT_0) \phi\ra_{v,\varepsilon,i}
 =\left(\rho,\rho u,\rho\Theta,\rho \erot(\Trot),\rho
   \evib(\Tvib) \right)\rbrace.
\end{split}
     \end{equation}
Now we state the H-theorem for our model.

\begin{proposition} \label{prop:theoH}
  We assume $\frac23<\Pr\leq 1$ and conditions~\eqref{eq-erelpositive} and~\eqref{condpi} are satisfied. Our ES-BGK model~\eqref{eq: cinetique}-\eqref{eq: operator} satisfies
\begin{equation}\label{eq-Hineg} 
  \partial_t \ent(f)+ \nabla \cdot  \la v ( f\log(f/\varepsilon^{\frac{\delta}2-1} )-f)\ra_{v,\varepsilon,i}=\la
  \frac{1}{\tau}({\mathbf{\mathcal{G}}}[f]-f)\log(f/\varepsilon^{\frac{\delta}2-1} )\ra_{v,\varepsilon,i} \leq 0,
\end{equation}
under the additional condition
\begin{equation}  \label{eq-condZ}
\frac{\tau}{\Zrot\tauc} + \frac{\tau}{\Zvib\tauc} \leq
\frac{3}{5}. 
\end{equation}
Moreover, the right-hand side of~\eqref{eq-Hineg} is zero
if, and only if $f = {\mathbf{\mathcal{M}}}[f]$.
\end{proposition}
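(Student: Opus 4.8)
The plan is to follow the standard route for ES-BGK H-theorems, reducing the microscopic entropy production to a finite-dimensional inequality on the macroscopic moments. First I would multiply the model equation $\partial_t f+v\cdot\nabla f=\frac1\tau(\G[f]-f)$ by $\log(f/\varepsilon^{\frac\delta2-1})$ and integrate in $(v,\varepsilon,i)$: the transport term produces the entropy-flux divergence and the collision term produces the right-hand side of~\eqref{eq-Hineg}, so the whole statement reduces to the sign of the dissipation $\la(\G[f]-f)\log(f/\varepsilon^{\frac\delta2-1})\ra_{v,\varepsilon,i}$. Applying the elementary pointwise inequality $t\log t-t+1\ge0$ with $t=\G[f]/f$ gives
\[
(\G[f]-f)\log(f/\varepsilon^{\frac\delta2-1})\le \G[f]\log(\G[f]/\varepsilon^{\frac\delta2-1})-f\log(f/\varepsilon^{\frac\delta2-1})-(\G[f]-f),
\]
which, after integration and use of $\la\G[f]-f\ra_{v,\varepsilon,i}=0$ (mass conservation), yields $\la(\G[f]-f)\log(f/\varepsilon^{\frac\delta2-1})\ra_{v,\varepsilon,i}\le \ent(\G[f])-\ent(f)$.

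Next I would exploit the variational characterisation of $\ent$. Since $f\in{\cal X}_{\rho,u,\Theta,\Trot,\Tvib}$ and the unique minimiser of $\ent$ over this set (given by Lagrange multipliers) is the product Gibbs state $\G_\star[f]$ — a Gaussian of covariance $\Theta$ times the rotational and vibrational equilibria at $\Trot$ and $\Tvib$ — we have $\ent(f)\ge\ent(\G_\star[f])$. Combined with the previous bound this gives $\la(\G[f]-f)\log(f/\varepsilon^{\frac\delta2-1})\ra_{v,\varepsilon,i}\le\ent(\G[f])-\ent(\G_\star[f])$, so it suffices to prove the purely macroscopic inequality $\ent(\G[f])\le\ent(\G_\star[f])$. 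Both states are of exponential type, so their entropies are explicit: up to a common positive factor and additive constants, $-\ent$ equals $\rho$ times $\tfrac R2\log\det\Pi+\srot(\erot(\Trotrel))+\svib(\evib(\Tvibrel))$ for $\G[f]$, and the same expression with $(\Theta,\Trot,\Tvib)$ for $\G_\star[f]$. The target therefore reads as an \emph{increase of the macroscopic entropy} $\Sigma(\Theta,E_{rot},E_{vib}):=\tfrac R2\log\det\Theta+\srot(E_{rot})+\svib(E_{vib})$ under the relaxation $(\Theta,\Trot,\Tvib)\mapsto(\Pi,\Trotrel,\Tvibrel)$, which by construction~\eqref{eq-Ttrrel2} conserves the total energy.

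To establish this I would use the \emph{joint concavity} of $\Sigma$ (concavity of $\log\det$ on positive-definite matrices, and of $\srot,\svib$ since $s_\alpha'=1/T_\alpha$ is decreasing). The tangent (Bregman) bound at the relaxed state, using $\nabla_\Theta(\tfrac R2\log\det)=\tfrac R2\Theta^{-1}$ and $\partial_{E_\alpha}s_\alpha=1/T_\alpha$, reduces $\Sigma(\Pi,\cdot)-\Sigma(\Theta,\cdot)\ge0$ to the single inequality
\[
\tfrac R2\bigl(3-\Trace(\Pi^{-1}\Theta)\bigr)+\tfrac1{\Trotrel}\bigl(\erot(\Trotrel)-\erot(\Trot)\bigr)+\tfrac1{\Tvibrel}\bigl(\evib(\Tvibrel)-\evib(\Tvib)\bigr)\ge0 ,
\]
where by~\eqref{eq-Trotrel}--\eqref{eq-Tvibrel} the last two numerators equal $\tfrac{\tau}{\Zrot\tauc}(\erot(\Ttr)-\erot(\Trot))$ and $\tfrac{\tau}{\Zvib\tauc}(\evib(\Ttr)-\evib(\Tvib))$. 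The first term carries the anisotropy: since $\Pi$ and $\Theta$ commute, $\Trace(\Pi^{-1}\Theta)=\sum_i\lambda_i(\Theta)/\lambda_i(\Pi)$ with $\lambda_i(\Pi)=R\Ttrrel+(1-\tfrac1{\Prandtl})(\lambda_i(\Theta)-R\Ttr)$, and controlling these eigenvalue ratios is exactly where $\Prandtl>\tfrac23$ (so that deviations are contracted toward the mean) and the positive-definiteness of $\Pi$ enter.

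The hard part is this last inequality, because the two mechanisms are genuinely coupled rather than separable: when $\Ttr>\Trot,\Tvib$ energy leaves translation, so $\Ttrrel<\Ttr$ shrinks every $\lambda_i(\Pi)$ and makes the translational term negative, and this loss must be dominated by the internal gains; conversely the internal terms change sign with $\Ttr-\Trot$ and $\Ttr-\Tvib$, and since $\Ttrrel$ may undershoot $\Trotrel,\Tvibrel$ the factors $\tfrac1{\Trotrel}-\tfrac1{\Ttrrel}$ and $\tfrac1{\Tvibrel}-\tfrac1{\Ttrrel}$ need not be favourably signed. I expect that balancing the translational drain (with heat capacity $\tfrac32R$) against the exchange weights $\tfrac{\tau}{\Zrot\tauc}$ and $\tfrac{\tau}{\Zvib\tauc}$ in the worst case is precisely what forces the smallness condition~\eqref{eq-condZ}, the constant $\tfrac35$ arising from this balance. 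This is the step I anticipate to be the main obstacle and the reason the proof is more involved than for~\cite{DMM_2021}. Finally, for the equality case I would track the strict inequalities back: equality forces $f=\G_\star[f]$ (strict Gibbs), strict concavity of $\log\det$ forces $\Theta=R\Ttr I$, and strict concavity of $\str,\srot,\svib$ together with~\eqref{eq-Trotrel}--\eqref{eq-Ttrrel2} forces $\Trot=\Tvib=\Ttr$, whence $\Pi=R\Ttr I$ and $f=\M[f]$; the converse is immediate.
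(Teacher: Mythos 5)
Your reduction down to the macroscopic entropy inequality is the same as the paper's: entropy flux plus dissipation, the convexity bound $\la(\G[f]-f)\log(f/\varepsilon^{\frac\delta2-1})\ra_{v,\varepsilon,i}\le\ent(\G[f])-\ent(f)$, the variational characterisation of $\ent$ over the moment sets, and the resulting sufficient condition $S(\rho,u,\Pi,\Trotrel,\Tvibrel)\le S(\rho,u,\Theta,\Trot,\Tvib)$. From that point on, however, your argument has a genuine gap: the final inequality you reduce to, namely
\begin{equation*}
\tfrac R2\bigl(3-\Trace(\Pi^{-1}\Theta)\bigr)+\tfrac1{\Trotrel}\bigl(\erot(\Trotrel)-\erot(\Trot)\bigr)+\tfrac1{\Tvibrel}\bigl(\evib(\Tvibrel)-\evib(\Tvib)\bigr)\ge0,
\end{equation*}
is never proved --- you state that you ``expect'' the balance to force condition~\eqref{eq-condZ} and identify this as the main obstacle. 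But this \emph{is} the theorem: all of the difficulty, and the origin of the constant $\tfrac35$, lives in exactly this step. Worse, the sufficient condition you chose (a single tangent/Bregman bound at the relaxed state) is not clearly true: after using energy conservation the sum collapses to $\bigl(\tfrac1{\Trotrel}-\tfrac1{\Ttrrel}\bigr)\thetarot(\erot(\Ttr)-\erot(\Trot))+\bigl(\tfrac1{\Tvibrel}-\tfrac1{\Ttrrel}\bigr)\thetavib(\evib(\Ttr)-\evib(\Tvib))$ plus anisotropy corrections, and, as you yourself observe, the two factors in each product need not have the same sign when both exchanges are active simultaneously. A one-shot concavity bound does not disentangle the rotational--vibrational coupling, and you also leave the anisotropy of $\Theta$ inside the inequality, so you would additionally have to take a worst case over all positive-definite $\Theta$ of given trace.

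The paper resolves both difficulties with two devices you do not have. First, the anisotropy is eliminated \emph{before} any concavity argument via the determinant inequality $\det\Theta/\det\Pi\le(\etr(\Ttr)/\etr(\Ttrrel))^3$ (appendix~\ref{subsec:ineqthetapi}, where $\Prandtl>\tfrac23$ and the positive definiteness of $\Pi$ are used), which reduces everything to a scalar inequality for ${\cal S}$ evaluated at three energies. Second, the coupling is handled by regarding that scalar entropy as a concave function $h(\thetarot,\thetavib)$ of the parameters $\thetarot=\tau/(\Zrot\tauc)$, $\thetavib=\tau/(\Zvib\tauc)$ on the triangle $\thetarot+\thetavib\le\tfrac35$: concavity places the minimum at a vertex, and on each edge only \emph{one} internal mode is active, so the edge derivative factorises as $\bigl(\tfrac1{T_{tr}^{rel,0}}-\tfrac1{\Tvibrel}\bigr)(\evib(\Tvib)-\evib(\Ttr))$, a product of two brackets whose signs can be matched; the constant $\tfrac35$ then emerges from the estimates $\cvvib\le R=\tfrac23\cvtr$ and the control of $\cvvib(T_1)/\cvvib(T_2)$ via the convexity of $\evib$. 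Without some substitute for these two ideas, your proof cannot be completed as written.
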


\begin{proof}
  We remind elements of proof already proved in~\cite{DMM_2021} that apply here too:
  \begin{enumerate}
  \item The Gaussian distribution
   ${\mathbf{\mathcal{G}}}[f]$ defined by~\eqref{eq: operator} is the
   unique minimizer of the entropy functional
   $\ent(f)$ on the set ${\cal X}_{\rho,u,\Pi,\Trotrel,\Tvibrel}$, defined according to~\eqref{eq-setX}.

   \item By convexity of $\ent$, the right-hand side of~\eqref{eq-Hineg} is non positive under the sufficient condition
     \begin{equation}  \label{eq-HGf}
\ent({\cal G}[f]) \leq \ent(f).
\end{equation}
This condition is not obvious, since $f$ is not in ${\cal X}_{\rho,u,\Pi,\Trotrel,\Tvibrel}$.

\item For any macroscopic quantities $(\rho,u,\Theta,\Trot,\Tvib)$, we denote by
  $S(\rho,u,\Theta,\Trot,\Tvib)$ the minimum value of $\ent$ on ${\cal X}_{\rho,u,\Theta,\Trot,\Tvib}$, and we have
  \begin{equation}\label{eq-defSent} 
    S(\rho,u,\Theta,\Trot,\Tvib) =   \rho \log \rho +  C \rho
- \frac{\rho}{R} {\cal S}(\frac32 (\det \Theta)^{\frac13},\erot(\Trot),\evib(\Tvib))
\end{equation}
where ${\cal S}$ is the entropy at constant density defined in
section~\ref{sec:thermo}, and $C$ is a constant that depends on $\delta$ and $R$ only.

\item By point 1, we have $\ent({\cal G}[f]) = S(\rho,u,\Pi,\Trotrel,\Tvibrel)$.

\item Since $f$ is in ${\cal X}_{\rho,u,\Theta,\Trot,\Tvib}$, then we have $S(\rho,u,\Pi,\Trotrel,\Tvibrel) \leq \ent(f)$.
  
\item A sufficient condition for~\eqref{eq-HGf} is therefore
\begin{equation}  \label{eq-conds1}
  S(\rho,u,\Pi,\Trotrel,\Tvibrel) \leq S(\rho,u,\Theta,\Trot,\Tvib).
  \end{equation}

\item We have
  \begin{equation}\label{ineqthetapi} 
\frac{\det \Theta}{\det\Pi}\leq \left(\frac{\etr(\Ttr)}{\etr(\Ttrrel)}\right)^3.
\end{equation}
The proof of this inequality is slightly different from that
shown in~\cite{DMM_2021} and is given in appendix~\ref{subsec:ineqthetapi}.
  \item With points 3, 6, and 7, a sufficient condition for~\eqref{eq-conds1} is
    \begin{equation}  \label{eq-conds2}
      \begin{split}
  & {\cal S} (\etr(\Ttr))  , \srot(\erot(\Trot)) ,\svib(\evib(\Tvib)) \\
  & \leq
    {\cal S} (\etr(\Ttrrel))  , \srot(\erot(\Trotrel)) ,\svib(\evib(\Tvibrel)),
  \end{split}
\end{equation}

\end{enumerate}
The proof of this last inequality is the only part which is different from~\cite{DMM_2021}, and a bit more involved. Our proof is divided into 5 steps.

\paragraph{Step 1: parametrization of $\cal S$.}
We consider $(\etr(\Ttrrel),\erot(\Trotrel),\evib(\Tvibrel)$ as
(affine) functions of parameters $\Zrot$ and $\Zvib$, and we set
\begin{equation}  \label{eq-def_h}
h(\thetarot,\thetavib) = {\cal S}(\etr(\Ttrrel),\erot(\Trotrel),\evib(\Tvibrel) ),
\end{equation}
where $\thetarot=\frac{\tau}{\Zrot\tauc }$ and
$\thetavib=\frac{\tau}{\Zvib\tauc }$. With these new parameters, we have
\begin{align}
  & \etr(\Ttrrel) = \etr(\Ttr)
    + \thetarot(\erot(\Trot) -\erot(\Ttr) )
    + \thetavib(\evib(\Tvib) -\evib(\Ttr) )  \label{eq-def_Ttrrel} \\
   &  \erot(\Trotrel) = \thetarot \erot(\Ttr) + (1- \thetarot) \erot(\Trot),
  \label{eq-def_Trotrel} \\
 &   \evib(\Tvibrel) = \thetavib \evib(\Ttr) + (1- \thetavib) \evib(\Tvib)
  \label{eq-def_Tvibrel}.
  \end{align}

Now it it clear that for $(\thetarot,\thetavib)=(0,0)$, the relaxation
energies reduce to the initial energies, that is to say
\begin{equation*}
  (\etr(\Ttrrel),\erot(\Trotrel),\evib(\Tvibrel)
  )|_{(\thetarot,\thetavib)=(0,0)} = (\etr(\Ttr),\erot(\Trot),\evib(\Tvib)).
\end{equation*}
Consequently, our entropy inequality~(\ref{eq-conds2}) reads
\begin{equation}  \label{eq-ineqh}
h(0,0) \leq h(\thetarot,\thetavib).
  \end{equation}

  While the domain of definition of $h$ is given by positiveness
  condition~\eqref{eq-erelpositive}, here we need to reduce it to
  condition~\eqref{eq-condZ} given in the proposition. With our new parameters, it reads
\begin{equation}\label{eq-domaine_reduit_h} 
\thetarot + \thetavib\leq \frac35.
  \end{equation}
In fact, numerical tests suggest~(\ref{eq-ineqh}) can be false if this condition is not fulfilled.
  
Finally, note that $h$ is concave, as composed of an affine function and the
concave function $\cal S$.  

  \paragraph{Step 2: relaxation temperatures as convex combinations}

  Here we use the same argument as used in the proof of proposition~\ref{prop:pos_ener}: we linearize~\eqref{eq-def_Ttrrel}--\eqref{eq-def_Tvibrel} by using the mean value theorem, and we get
  \begin{equation} \label{eq-Talphaconvex}
    \begin{split}
      \Ttrrel & = (1 - \thetarot \frac{\cvrot}{\cvtr}
                    - \thetavib \frac{\cvvib(T_1)}{\cvtr}) \Ttr
               + \thetarot \frac{\cvrot}{\cvtr} \Trot
                + \thetavib \frac{\cvvib(T_1)}{\cvtr} \Tvib, \\
       \Trotrel & = \thetarot  \Ttr + (1-\thetarot ) \Trot, \\
       \Tvibrel & = \thetavib \frac{\cvvib(T_1)}{\cvvib(T_2)} \Ttr
               + (1 - \thetavib \frac{\cvvib(T_1)}{\cvvib(T_2)}) \Tvib,  
    \end{split}
  \end{equation}
    where $T_1$ and $T_2$ are some temperatures between $\Ttr$ and $\Tvib$, and $\Tvibrel$ and $\Tvib$, respectively, defined by
    \begin{equation}\label{eq-defT1T2} 
      \cvvib(T_1) = \frac{\evib(\Ttr) - \evib(\Tvib)}{\Ttr - \Tvib}, \qquad \text{and} \qquad 
      \cvvib(T_2) = \frac{\evib(\Tvib) - \evib(\Tvibrel)}{\Tvib - \Tvibrel}.
      \end{equation}

\paragraph{Step 3: minimization of $h$.}

In the plane $(\thetarot,\thetavib)$, condition~\eqref{eq-domaine_reduit_h} defines a
triangle $\cal T$ of vertices $(0,0)$, $(0,\frac{3}{5})$, $(\frac{3}{5},0)$.
Since $h$ is concave, its minimum on $\cal T$ is reached at one vertex
of $\cal T$. Therefore
\begin{equation*}
  h(\thetarot,\thetavib) \geq \min(h(0,0),h(0,\frac{3}{5}),h(\frac{3}{5},0)).
\end{equation*}
for every $(\thetarot,\thetavib)$ in $\cal T$.

Then a sufficient condition for~\eqref{eq-ineqh} is that the minimum is reached at $(0,0)$, that is to say
\begin{equation}  \label{eq-condvertices}
  h(0,0) \leq h(0,\frac{3}{5}) \qquad \text{ and } \qquad h(0,0) \leq h(\frac{3}{5},0).
\end{equation}

Now, we prove the first inequality of~(\ref{eq-condvertices}). In fact, it is simpler to prove a stronger property, namely
\begin{equation} \label{eq-condvertices2}
  h(0,0) \leq h(0,\thetavib)
\end{equation}
for every $\thetavib\leq \frac35$. We start by using that $h$ is
concave but also that it is differentiable to get
\begin{equation*}
 h(0,\thetavib) \geq h(0,0) + \thetavib \partial_{\thetavib} h(0,\thetavib),
\end{equation*}
and now a sufficient condition to get~(\ref{eq-condvertices2}) is
$\partial_{\thetavib} h(0,\thetavib)\geq 0$. But the chain rule
gives
\begin{equation*}
  \begin{split}
  \partial_{\thetavib} h(\thetarot,\thetavib)
&   = \nabla {\cal S}(\etr(\Ttrrel),\erot(\Trotrel),\evib(\Tvibrel) )
  \cdot\frac{\partial}{\partial \thetavib}
  \begin{pmatrix}
     \etr(\Ttrrel)\\
     \erot(\Trotrel)\\
     \evib(\Tvibrel
  \end{pmatrix} \\
  & = \Bigl(\frac{1}{\Ttrrel} - \frac{1}{\Tvibrel} \Bigr) (\evib(\Tvib) - \evib(\Ttr)). 
  \end{split}
\end{equation*}
Now for $\thetarot=0$ this reads
\begin{equation}\label{eq-gradh} 
   \partial_{\thetavib} h(0,\thetavib) = (\frac{1}{{T}_{tr}^{rel,0}} - \frac{1}{T_{vib}^{rel}} ) (\evib(\Tvib) - \evib(\Ttr)),
 \end{equation}
 where the exponent $0$ indicates that ${T}_{tr}^{rel,0}$ is defined
 by~\eqref{eq-def_Ttrrel} with $\thetarot=0$.

 This quantity can be proved to be non negative if we are able to show
 that $T_{vib}^{rel} - {T}_{tr}^{rel,0} = \alpha
 (\Tvib - \Ttr)$ with $\alpha \geq 0$. Indeed, if the second bracket
 of~(\ref{eq-gradh}) is positive, then $\Tvib - \Ttr \geq 0$ since
 $\evib$ is an increasing function, and hence $T_{vib}^{rel}
 - {T}_{tr}^{rel,0}\geq 0$ too, and the first bracket
 of~(\ref{eq-gradh}) is positive as well, which gives the sign of
 $\partial_{\thetavib} h(0,\thetavib)$. The proof is the same in the
 opposite case.

 The relation $\Tvibrel - {T}_{tr}^{rel,0}  = \alpha
 (\Tvib - \Ttr)$ is obtained with~\eqref{eq-Talphaconvex}. With
 $\thetarot= 0$, these relations give
\begin{equation*}
  \Tvibrel - {T}_{tr}^{rel,0}
  = \left(  1 - \thetavib  \frac{\cvvib(T_1)}{\cvvib(T_2)} 
              - \thetavib \frac{\cvvib(T_1)}{\cvtr}\right)  (\Tvib - \Ttr). 
\end{equation*}
Our coefficient $\alpha$ is clearly non negative under the condition
\begin{equation}\label{eq-condcvvib} 
  \thetavib  \frac{\cvvib(T_1)}{\cvvib(T_2)} +  \thetavib \frac{\cvvib(T_1)}{\cvtr}  \leq 1.
\end{equation}
Now, this condition is analyzed with two different cases.

\paragraph{First case:  $\Ttr\leq  \Tvib$.} Since $\Tvibrel$ is a convex combination of $\Ttr$ and $\Tvib$, we have $\Ttr\leq \Tvibrel\leq \Tvib$. Then the intermediate temperatures $T_1$ and $T_2$ are in intervals $[\Ttr,\Tvib] $ and $[\Tvibrel,\Tvib]$, respectively. Now, the convexity of $\evib$ implies $\cvvib(T_2)\geq \cvvib(T_1)$ (see appendix~\ref{app:evibconvex}). Consequently, the first term of the left-hand side of~\eqref{eq-condcvvib} satisfies
\begin{equation*}
  \thetavib \frac{\cvvib(T_1)}{\cvvib(T_2)} \leq   \thetavib.
\end{equation*}
Moreover, the second term satisfies $\thetavib
\frac{\cvvib(T_1)}{\cvtr} \leq \frac23 \thetavib$ (since $\cvvib$ is bounded by $R$, see section~\ref{subsec:prop_energies}). Finally, the left-hand side of~\eqref{eq-condcvvib} satisfies
\begin{equation*}
  \thetavib  \frac{\cvvib(T_1)}{\cvvib(T_2)} +  \thetavib \frac{\cvvib(T_1)}{\cvtr}  \leq \frac53 \thetavib
\end{equation*}
which is indeed lower than 1, since $\thetavib\leq 3/5$. Therefore~\eqref{eq-condcvvib} is satisfied.

\paragraph{Second case: $\Ttr\geq  \Tvib$.} Now we have $\Tvib\leq
\Tvibrel\leq \Ttr$. This case is more delicate, since in the first
term of~\eqref{eq-condcvvib}$, \frac{\cvvib(T_1)}{\cvvib(T_2)}$ now is
greater than 1. Thus we must work on the product
$\thetavib\frac{\cvvib(T_1)}{\cvvib(T_2)}$. By
using~\eqref{eq-defT1T2}, we have
\begin{equation}\label{eq-diffrel1}
  \begin{split}
  \thetavib  \frac{\cvvib(T_1)}{\cvvib(T_2)}
  &   = \thetavib  \left(\frac{\evib(\Ttr) - \evib(\Tvib)}{\evib(\Tvib) - \evib(\Tvibrel)}\right) \left(\frac{\Ttr - \Tvib}{\Tvib - \Tvibrel}\right) \\
  & = \frac{\Tvibrel-\Tvib}{\Ttr-\Tvib}
  =   1 -   \frac{\Ttr - \Tvibrel}{\Ttr-\Tvib},
\end{split}
\end{equation}
where we have used~\eqref{eq-def_Tvibrel} to simplify the energy ratio.

  Now, note that
$\evib(\Ttr)-\evib(\Tvibrel)\leq \cvvib(\Ttr)(\Ttr - \Tvibrel)$,
since $\cvvib$ is bounded by $\cvvib(\Ttr)$ in $[\Tvib,\Ttr]$, and hence
\begin{equation*}
  \begin{split}
 \Ttr - \Tvibrel & \geq \frac{ \evib(\Ttr)-\evib(\Tvibrel)}{\cvvib(\Ttr)} \\
&  = \frac{(1-\thetavib)(\evib(\Ttr) - \evib(\Tvib))}{\cvvib(\Ttr)} 
= (1-\thetavib)\frac{\cvvib(T_1)}{\cvvib(\Ttr)}(\Ttr-\Tvib),
\end{split}
\end{equation*}
 from~\eqref{eq-def_Tvibrel} and~\eqref{eq-defT1T2}. Consequently, we go back to~\eqref{eq-diffrel1} and we get
 \begin{equation*}
   \thetavib  \frac{\cvvib(T_1)}{\cvvib(T_2)}
   \leq 1 - (1-\thetavib)\frac{\cvvib(T_1)}{\cvvib(\Ttr)},
 \end{equation*}
 which is now clearly lower than 1.
Therefore, a sufficient condition for~\eqref{eq-condcvvib} is
 \begin{equation*}
   1 - (1-\thetavib)\frac{\cvvib(T_1)}{\cvvib(\Ttr)} + \thetavib \frac{\cvvib(T_1)}{\cvtr}  \leq 1,
   \end{equation*}
   which is equivalent to
   \begin{equation*}
     \thetavib \leq \frac{1}{1 + \frac{\cvvib(\Ttr)}{\cvtr}}.
   \end{equation*}
   Now, since the ratio $\frac{\cvvib(\Ttr)}{\cvtr}$ is lower than $2/3$, this last inequality is satisfied if $ \thetavib \leq 1 / (1 + 2/3) =3/5$, which is what we wanted to prove, and hence~\eqref{eq-condcvvib} is now proved for every cases. This proves~\eqref{eq-condvertices2} for every $\thetavib\leq \frac35$ and hence the first inequality of~\eqref{eq-condvertices} is proved.

 The second inequality of~\eqref{eq-condvertices} is proved in a similar way, but much more easily, since $\erot$ is linear. Indeed, the ratio $\cvrot(T_1)/\cvrot(T_2)$ in the equivalent of~\eqref{eq-condcvvib} is equal to 1, and the inequality is obviously satisfied for every $\thetarot\leq 3/5$.

 This long analysis proves~\eqref{eq-condvertices}, and
 hence~\eqref{eq-ineqh} and in turn~\eqref{eq-conds2}. The proof of
 the proposition is now almost complete: the equilibrium part is
 proved like in~\cite{DMM_2021} and is left to the reader.

\end{proof}

\subsection{Discussion on the conditions for positiveness of relaxation energies, positive definiteness of $\Pi$, and H-theorem}

\paragraph{Hierarchy of conditions.}
Propositions~\ref{prop:pos_ener},~\ref{pipositive},
and~\ref{prop:theoH} hold for different conditions that are in fact
not completely independent.

For instance condition~\eqref{condpi} for positive
definiteness of $\Pi$ implies the third constraint of
condition~\eqref{eq-erelpositive} for the positiveness of relaxation energies: indeed, the right-hand side
of~\eqref{condpi} is lower than 1 for $\Prandtl \leq 1$.

Moreover, in proposition~\ref{prop:theoH}, condition~\eqref{eq-condZ}
clearly implies the first two constraints
of~\eqref{eq-erelpositive}. However, it does not always
implies the third constraint of~\eqref{eq-erelpositive}, since it is
temperature dependent.

This means that the number of conditions could be reduced in our
propositions.
Nevertheless, we find that the current redundancy is clearer, since
there is a clear hierarchy: for the H-theorem to hold, we should
first assume that the relaxation energies are positive,
and then that $\Pi$ is positive definite.

\paragraph{Physical validity.}
Now, we discuss the physical validity of these conditions. As an
example, we consider a flow of nitrogen, for which the characteristic
vibrational temperature is $T_0=3.371$K, and the molecular VSS
parameters are $\omega=0.74$ and $\alpha=1.36$. With the Eucken
formula
$\Prandtl = 2(5+\delta + \delta_v)/(15 + 2 (\delta + \delta_v))$ and
definitions~\eqref{eq-icvib} and~\eqref{eq-taucvss}, we can compute
all the terms of conditions~\eqref{eq-erelpositive},~\eqref{condpi},
and~\eqref{eq-condZ}, for any arbitrary temperature, and hence we can
check for what range of temperature these conditions are
satisfied. For $\Zvib$, we use the Millikan-White formula as given
in~\cite{bird, Millikan1963}. For $\Zrot$, its usual value in
aerodynamics is $\Zrot=5$, but we also use its value as given by the
Parker formula~\cite{bird,parker1959}.  Our observations are as
follows.

For $\Zrot=5$, all the conditions are satisfied up to a temperature of
$40.000$ K. For larger temperatures, the constraint
$\tau/\Zvib\tauc<1$ of~\eqref{eq-erelpositive} fails, and the vibrational
energy becomes negative. This upper bound is clearly sufficient here, since the model is not
designed for so large temperatures, for which other physical
phenomenon have to be taken into account (dissociation for instance). 
In addition, the model of Millikan and White~\cite{Millikan1963} is an empirical model which in the original paper itself is only defined in a temperature range between $280\,\mathrm{K} <T<8000\,\mathrm{K}$, so that the physical suitability at $T=40.000\,\mathrm{K}$ may be doubted. In general, the physical suitability of the model for very high temperatures is doubtful, since $\Zvib$ then approaches 0. However, a $\Zvib<1$ would be problematic from a purely physical point of view, since the relaxation time would then be smaller than the collision time itself.

For $\Zrot$ as given by Parker formula, note that $\Zvib$ and $\Zrot$
behave very differently, since $\Zrot$ increases with the temperature, while
$\Zvib$ decreases very fast, and is infinitely large for small
temperatures. Then we observe that all the conditions are satisfied for
temperatures between $60$ and $42.000$ K. Again, the upper bound is
clearly sufficient. The lower bound is due to the constraint
$\tau/\Zrot\tauc<1$ of~\eqref{eq-erelpositive}: for lower
temperatures, this constraint is not satisfied, and the rotational
energy becomes negative (the other conditions fail for small
temperatures a bit smaller, between $20$ and $32$, which is less
restrictive). Here the same problem arises as already described for the vibration, since $\Zrot$ goes towards 0 for decreasing temperatures. Again, $\Zrot<1$ is difficult from a purely physical point of view. The model is therefore not suitable for such low temperatures. There is another problem: the characteristic rotational temperature of N$_2$ is $T_{0,rot}=2.88\,\mathrm{K}$. For hydrogen H$_2$, for example, this is already $T_{0,rot}=87.6\,\mathrm{K}$.  At such low temperatures, one can no longer necessarily assume that the rotational degree of freedom is fully excited, which means that the number of degrees of freedom of the rotation and thus $\cvrot$ also become temperature-dependent for very low temperatures, comparable with the vibration in the considered temperature range. In the model proposed here, however, this effect was not taken into account, as these temperatures are lower than the smallest temperatures generally met in aerodynamics. Therefore, this effect is typically also neglected in DSMC codes and the rotational temperature is assumed to be continuous.
In general, very little information can be found in the literature about the relaxation time of rotation at very low temperatures. However, in Riabov~\cite{riabov2011} one can find a discussion about the discrepancy between the classical consideration of Parker's model and the technique of Lebed and Riabov~\cite{lebed1979quantum} for the relaxation times of rotation for $T<100\,\mathrm{K}$. It becomes clear that the Parker model can no longer be used for these low temperatures.

% xxxxxxxxxxxxxxxxxxxxxxxxxxxxxxxxxxxxxxxxxxxxxxxxxxxxxxxx
% results obtained with Maple sheet "etude_conditions.mw"
% gas N2
% 1) Zrot = 5
% 1.1) model HS
% every phi<0 for T from 1 to 1e5
% 1.2) model VHS
% phi2 >0 for T aroun 40.000 K, the others around 60.000 K
% 1.3) model VSS
% idem

% Conclusion : with Zrot=5, model ok ford Ttr<40.000 K. For larger
% temperatures, the vibrational energy becomes negative.

% 2) Zrot Parker
% model VSS only
% phi1>0 for T<87 K 
% phi2 for T>42.000
% phi3>0 for T<27 K et T>66.000
% phi4>0 for T<44 K et T>59.000
% phi5>0 for T<38 K et T>61.000

% Conclusion : model ok for Ttr between 90 and 40.000 K

% 3) Zrot Lordi-Mates (not presented in the paper)
% model VSS
% phi1>0 for T<60
% phi2 for T>42.000
% phi3>0 for T<20 et T>66.000
% phi4>0 for T<32 et T>59.000
% phi5>0 for T<27 et T>61.000

% Conclusion : model ok for Ttr between 60 and 42.000 K
% xxxxxxxxxxxxxxxxxxxxxxxxxxxxxxxxxxxxxxxxxxxxxxxxxxxxxxxx

% =========
\subsection{Comparison with the ES-BGK model of Andri\`es et al.~\cite{ALPP}}
\label{subsec:compAndries}
%=========
If the vibration modes are neglected, our model reduces to the
following translation-rotation ES-BGK model:
\begin{equation}  \label{eq-ESBGKrotP1}
\partial_t f + v \cdot \nabla_x f = \frac{1}{\tau} (\G[f] -f),
\end{equation}
where now $f$ does not depend on $i$, while the Gaussian is $\G[f] =
\Gtr[f] \Grot[f]$, with $\Pi$ and $\Trotrel$ are still defined
by~\eqref{eq-defPi} and~\eqref{eq-Trotrel}, and $\Ttrrel$ is now
defined by
\begin{equation}  \label{eq-ESBGKrotP2}
\etr(\Ttrrel) = \etr(\Ttr)
  -\frac{\tau}{\Zrot \tauc} (\erot(\Ttr) - \erot(\Trot)),
\end{equation}
while $\Tvibrel$ is not used anymore. The macroscopic quantities are
defined as in~\eqref{eq-mtsf}--\eqref{eq-Theta_q} without the series
in $i$. Here, the model is not restricted to diatomic gases anymore,
and $\delta$ can take any integer values greater than or equal to 2.

For such polyatomic gases, the first ES-BGK model was proposed by
Andri\`es et al~\cite{ALPP}, and is often used in the literature (see
for instance~\cite{KKA_2019}). This model reads as above with
relaxation tensor
\begin{equation}  \label{eq-PiAndries}
\Pi = (1-\theta) ((1-\nu) R \Ttr I + \nu \Theta) + \theta R \Teq,
\end{equation}
and the relaxation rotational temperature is
\begin{equation}  \label{eq-TrotrelAndries}
\Trotrel = \theta \Teq + (1-\theta) \Trot,
\end{equation}
where the equilibrium temperature is
\begin{equation}  \label{eq-TeqAndries}
\Teq = \frac{3\Ttr + \delta \Trot}{3 + \delta}.
\end{equation}
Note that in~\cite{ALPP}, $\Trot$ is denoted by $T_{int}$, $\Trotrel$ by
$T_{rot}^{rel}$, and $\Pi$ by ${\cal T}$. Moreover, the variable
$I=\varepsilon^{\delta/2}$ is used instead of $\varepsilon$, which
does not change our analyzis and conclusions below. Finally, the
parameters $\theta$ and $\nu$ are defined by
\begin{equation}  \label{eq-nuthetaAndries}
\theta = \frac{1}{\Zrot}, \qquad \text{ and } \qquad (1-\theta)\nu =
\frac{1}{\Prandtl} -1.
\end{equation}

First, we show that our model can be written under the same form as
the Andri\`es et al. model, with modified coefficients.
\begin{proposition}
  The relaxation tensor $\Pi$ and rotational temperature of
  model~\eqref{eq-ESBGKrotP1}-\eqref{eq-ESBGKrotP2} can be written under
  form~\eqref{eq-PiAndries} and~\eqref{eq-TrotrelAndries} with
  modified coefficients $\tilde{\theta}$ and $\tilde{\nu}$ defined by
  \begin{equation*}
    \tilde{\theta} = \frac{3+\delta}{3} \frac{\tau}{\tauc}\theta\qquad \text{ and } \qquad 
    (1-\tilde{\theta})\tilde{\nu} = \frac{1}{\Prandtl} -1.
  \end{equation*}
\end{proposition}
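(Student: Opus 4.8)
The plan is to rewrite both tensors in a common basis — the decomposition of a symmetric tensor into its isotropic and trace-free parts — and then identify coefficients, treating the rotational temperature as a separate scalar identity. Writing $\thetarot=\frac{\tau}{\Zrot\tauc}$ and using that $\Trace\Theta=3R\Ttr$, so that $\Theta=R\Ttr I+\Theta^{\mathrm{dev}}$ with $\Theta^{\mathrm{dev}}$ trace-free, our definition~\eqref{eq-defPi} becomes $\Pi=R\Ttrrel I+\frac{\Prandtl-1}{\Prandtl}\Theta^{\mathrm{dev}}$. Expanding~\eqref{eq-PiAndries} and collecting isotropic and trace-free parts gives $\Pi=\bigl[(1-\tilde\theta)R\Ttr+\tilde\theta R\Teq\bigr]I+(1-\tilde\theta)\tilde\nu\,\Theta^{\mathrm{dev}}$. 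Since both tensors share the eigenvectors of $\Theta$, the matrix identity is equivalent to matching these two scalar coefficients, to which we adjoin the scalar identity~\eqref{eq-TrotrelAndries} for the rotational relaxation temperature.

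First I would treat the trace-free part, which is immediate and decoupled: identifying the coefficients of $\Theta^{\mathrm{dev}}$ forces $(1-\tilde\theta)\tilde\nu$ to equal the coefficient $\frac{\Prandtl-1}{\Prandtl}$ prescribed by our construction~\eqref{eq-defPi}. This is the very Prandtl-number relation already imposed on $\nu$ in~\eqref{eq-nuthetaAndries}, so $\tilde\nu$ obeys the unchanged relation $(1-\tilde\theta)\tilde\nu=\frac{1}{\Prandtl}-1$, which is the second claimed identity. Note this step does not yet pin down $\tilde\theta$.

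The value of $\tilde\theta$ comes from the remaining two scalar conditions. Using the linearity of $\etr$ and $\erot$ from~\eqref{eq-cvtrrot}, equation~\eqref{eq-ESBGKrotP2} reads $\Ttrrel=\Ttr-\frac{\delta}{3}\thetarot(\Ttr-\Trot)$ and~\eqref{eq-Trotrel} reads $\Trotrel=\thetarot\Ttr+(1-\thetarot)\Trot$. I would then insert the explicit equilibrium temperature $\Teq=\frac{3\Ttr+\delta\Trot}{3+\delta}$ into the two target forms, namely $\Ttrrel=(1-\tilde\theta)\Ttr+\tilde\theta\Teq$ (the isotropic part of $\Pi$) and $\Trotrel=\tilde\theta\Teq+(1-\tilde\theta)\Trot$, and solve each for $\tilde\theta$. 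Both are affine in $\Ttr,\Trot$ and reduce to identifying the coefficient of $(\Ttr-\Trot)$; each yields $\tilde\theta=\frac{3+\delta}{3}\thetarot=\frac{3+\delta}{3}\frac{\tau}{\tauc}\theta$, which is the first claimed identity.

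The one substantive point — and the place where a careless attempt could fail — is that a single scalar $\tilde\theta$ must simultaneously reproduce the isotropic part of the pressure tensor and the rotational relaxation temperature, although these are a priori independent conditions. The mechanism that makes this work is the specific $3:\delta$ weighting of translational and rotational modes in $\Teq$: it is exactly this weighting that converts the purely rotational factor $\thetarot$ of~\eqref{eq-Trotrel}--\eqref{eq-ESBGKrotP2} into the common factor $\frac{3+\delta}{3}\thetarot$ in both targets. Once this consistency is checked, the two tensors and the two rotational temperatures coincide term by term, which proves the proposition.
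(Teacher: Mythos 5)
Your proof is correct and is precisely the direct calculation the paper leaves to the reader (the paper's one-line proof says to write $\Ttr$ as a function of $\Teq$ and $\Trot$ via~\eqref{eq-TeqAndries}, which is your substitution read in the opposite direction), and you rightly verify the one substantive point, namely that the isotropic part of $\Pi$ and the rotational relaxation temperature yield the \emph{same} $\tilde{\theta}=\frac{3+\delta}{3}\frac{\tau}{\tauc}\theta$. One small caveat: the deviatoric coefficient you correctly identify is $\frac{\Prandtl-1}{\Prandtl}=1-\frac{1}{\Prandtl}$, which is the negative of the quantity $\frac{1}{\Prandtl}-1$ you then equate it to; this sign slip is inherited from~\eqref{eq-nuthetaAndries} as printed in the paper, so your argument is consistent with the proposition as stated, but the identity should read $(1-\tilde{\theta})\tilde{\nu}=1-\frac{1}{\Prandtl}$.
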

This proposition is readily proved with a direct calculation in which
$\Ttr$ is written as a function of $\Teq$ and $\Trot$ by
using~\eqref{eq-TeqAndries}. This is left to the reader.

It is interesting to compare coefficients $\theta$ and $\tilde{\theta}$ of
both models. For instance, for a diatomic gas ($\delta=2$) with a
ratio $\tau/\tauc\approx 1.7$ in case of the HS
collision model (see section~\ref{subsec:chartime} with the value
$\Prandtl \approx 0.74$ for a diatomic gas without vibration modes),
we find $\tilde{\theta}$ is approximately $3\theta$. This shows that
these two models have very different coefficients.

Another way to compare these models is to look at energy
relaxations. The following proposition compares relaxation times for
both models.
\begin{proposition}
  For both ES-BGK models, in the space
  homogeneous case, the rotational temperature relaxes according to
  \begin{equation}\label{eq-Taurot} 
    \frac{d}{dt}\Trot = \frac{1}{\taurot} (\Ttr-\Trot).
  \end{equation}
  % However, while the relaxation time is $\taurot = \Zrot \tauc$ for
  % our model~\eqref{eq-ESBGKrotP1}--\eqref{eq-ESBGKrotP2}, it is $\taurot = \Zrot \tau (3+\delta)/3$ for Andri\`es et al. model.
  where the relaxation time is
  \begin{equation*}
    \begin{split}
& \taurot = \Zrot \tauc \qquad \text{ for
  our model~\eqref{eq-ESBGKrotP1}--\eqref{eq-ESBGKrotP2}, and } \\
& \taurot = \Zrot \tau (3+\delta)/3 \qquad \text{ for Andri\`es et al. model.}    
\end{split}
\end{equation*}
\end{proposition}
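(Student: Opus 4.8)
The plan is to turn the space homogeneous kinetic equation into a scalar ODE for the rotational energy, and then to use the linearity of $\erot$ to pass to an ODE for $\Trot$ itself. In the space homogeneous case both models read $\partial_t f = \frac{1}{\tau}(\G[f]-f)$, so I would first multiply by the rotational weight $\varepsilon$ and integrate over $(v,\varepsilon,i)$. Conservation of mass keeps $\rho$ constant in time, and the integral identities~\eqref{eq-intGrot} give $\la \varepsilon\,\G[f]\ra_{v,\varepsilon,i} = \rho\,\erot(\Trotrel)$, while by definition $\la \varepsilon f\ra_{v,\varepsilon,i} = \rho\,\erot(\Trot)$. This yields
\begin{equation*}
\frac{d}{dt}\erot(\Trot) = \frac{1}{\tau}\bigl(\erot(\Trotrel) - \erot(\Trot)\bigr).
\end{equation*}
Since $\erot(T) = \frac{\delta}{2}RT$ is affine, dividing by $\frac{\delta}{2}R$ reduces this to $\frac{d}{dt}\Trot = \frac{1}{\tau}(\Trotrel - \Trot)$, which is already of the announced form~\eqref{eq-Taurot}; the whole task is then to read off $\Trotrel - \Trot$ for each model.

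For our model~\eqref{eq-ESBGKrotP1}--\eqref{eq-ESBGKrotP2} I would insert definition~\eqref{eq-Trotrel}, which gives $\erot(\Trotrel) - \erot(\Trot) = \frac{\tau}{\Zrot\tauc}(\erot(\Ttr) - \erot(\Trot))$. The factor $\tau$ cancels the $1/\tau$ coming from the collision operator, and the linearity of $\erot$ leaves $\frac{d}{dt}\Trot = \frac{1}{\Zrot\tauc}(\Ttr - \Trot)$, hence $\taurot = \Zrot\tauc$. For the Andri\`es et al. model I would instead use~\eqref{eq-TrotrelAndries}, so that $\Trotrel - \Trot = \theta(\Teq - \Trot)$; a one-line computation with~\eqref{eq-TeqAndries} gives $\Teq - \Trot = \frac{3}{3+\delta}(\Ttr - \Trot)$, and with $\theta = 1/\Zrot$ from~\eqref{eq-nuthetaAndries} this produces $\frac{d}{dt}\Trot = \frac{3}{(3+\delta)\Zrot\tau}(\Ttr - \Trot)$, i.e. $\taurot = \Zrot\tau(3+\delta)/3$.

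The calculation is elementary and I do not expect a genuine obstacle. The only two points requiring care are bookkeeping rather than difficulty: first, the relaxation coefficient surviving in front of the driving term is effectively the collision time $\tauc$ in our model (because $\tau$ cancels) but the ES-BGK time $\tau$ in the Andri\`es et al. model, which is exactly what distinguishes the two relaxation times; second, the extra factor $3/(3+\delta)$ in the Andri\`es et al. case comes from relaxing $\Trot$ toward the overall temperature $\Teq$ rather than toward $\Ttr$. I would also note in passing that the displayed relaxation equation has a time-dependent right-hand side, since $\Ttr$ itself evolves, so~\eqref{eq-Taurot} should be read as relaxation toward a moving target; this does not affect the identification of $\taurot$.
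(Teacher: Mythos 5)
Your proposal is correct and follows exactly the route the paper indicates (the paper states the proof is "a direct integration of the homogeneous kinetic equation times $\varepsilon$, and then by using the definition of $\Trotrel$," leaving the details to the reader). Both computations check out, including the cancellation of $\tau$ in the first model and the factor $3/(3+\delta)$ arising from~\eqref{eq-TeqAndries} in the Andri\`es et al.\ case.
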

This proposition is proved by a direct integration of the homogeneous
kinetic equation times $\varepsilon$, and then by using the definition
of $\Trotrel$. This result shows that both models give different
exchange rates of energy between rotational and translational
modes. This can be seen more clearly with the previous example of a
diatomic gas, since we find $\taurot|_{Andries}\approx 3 \taurot$, which
means that the rotational energy of Andri\`es et al. model relaxes three
times as fast as with our model. If the correct relaxation rate is
$\taurot$ (as it is used in some DSMC codes,
see~\cite{BS_2017,pfeiffer2018,Haas1994,parker1959}), then the energy exchange rate as
given by Andri\`es et al. model is much too large.

\section{Hydrodynamic asymptotics \label{sec: chap}}
To obtain the conservation laws, we multiply~(\ref{eq:
  cinetique}) by the vector $1$, $v$, and
$\frac{1}{2}|v|^2+\varepsilon+iRT_0$ and we integrate to get:
\begin{equation}
\label{eq: conservation}
\begin{aligned}
&\partial_t \rho+\nabla \cdot (\rho u)=0,\\
&\partial_t (\rho u)+\nabla \cdot (\rho u \otimes u)+ \nabla\cdot\PSigma(f)=0,\\
&\partial_t \mathcal{E}+\nabla \cdot (\mathcal{E}u)+ \nabla \cdot
\left(\PSigma(f)u\right) + \nabla \cdot q(f)=0,
\end{aligned}
\end{equation}
where $\mathcal{E} =   \langle(
\frac{1}{2}|v|^2+\varepsilon+iRT_0)f\rangle_{v,\varepsilon,i}=\frac{1}{2}\rho|u|^2+\rho E(f) $ is the
total energy density, while the pressure tensor $\PSigma(f)$ and the
heat flux $q(f)$ have been defined by~\eqref{eq-Theta_q}.
If we have some characteristic values of length, time,
velocity, density, and temperature, our ES-BGK model~\eqref{eq:
  cinetique}--\eqref{eq: operator} can be non-dimensionalized. This
equation reads
\begin{equation}
\label{eq: nd_kinetic}
\partial_t f+v\cdot \nabla f=\frac{1}{\Kn \, \tau }({\mathbf{\mathcal{G}}}[f]-f),
\end{equation}
where $\Kn$ is the Knudsen number which is the ratio between the mean
free path and a macroscopic length scale. For simplicity, we use the
same notations for the non-dimensional and dimensional variables. Note that we assume here that the three
  relaxation times have the same asymptotic order of magnitude with
  respect to $\Kn$ (even if their values can be very different).

The Chapman-Enskog analysis consists in approximating the pressure
tensor and the heat flux at zero and first order with
respect to the Knudsen number, leading to compressible Euler
equations and compressible Navier-Stokes equations, respectively.

\subsection{Euler asymptotics}

We get the following
proposition, that can be proved as in~\cite{DMM_2021}.
\begin{proposition}\label{prop:Euler}
The moments of $f$, solution of the ES-BGK
  model~\eqref{eq: cinetique}, satisfy the compressible Euler
  equations up to $O(\Kn)$:
  \begin{equation}\label{eq-euler} 
\begin{aligned}
&\partial_t \rho+\nabla \cdot(\rho u)=0,\\
&\partial_t (\rho u)+\nabla \cdot(\rho u \otimes u)+\nabla p=O(\Kn),\\
&\partial_t \mathcal{E}+\nabla \cdot\left((\mathcal{E}+p)u\right)=O(\Kn),
\end{aligned}
\end{equation}
where $p=\rho R T_{eq}$ is the pressure at equilibrium.
The non-conservative form of these equations is
\begin{equation}\label{eq-euler_nc} 
\begin{aligned}
&\partial_t \rho+u\cdot \nabla\rho+\rho\nabla \cdot u=0,\\
&\partial_t u+(u\cdot \nabla)u+\frac{1}{\rho}\nabla p=O(\Kn),\\
&\partial_t T_{eq}+u\cdot \nabla T_{eq}+(\gamma -1)T_{eq}  \nabla \cdot u=O(\Kn),
\end{aligned}
\end{equation}
where $\gamma = c_p(\Teq)/c_v(\Teq)$ is the ratio of specific heats,
with $c_p(\Teq) = c_v(\Teq) + R$  and $c_v(\Teq) = \frac{de(\Teq)}{dT}
= \cvtr + \cvrot + \cvvib(\Teq)$ is the specific heat at constant volume.  
\end{proposition}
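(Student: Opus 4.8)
The plan is to carry out the zeroth-order Chapman--Enskog expansion of the non-dimensional model~\eqref{eq: nd_kinetic}. First I would write $f = f^{(0)} + \Kn\, f^{(1)} + O(\Kn^2)$ and collect the lowest-order terms: multiplying~\eqref{eq: nd_kinetic} by $\Kn\,\tau$ and letting $\Kn\to 0$ forces the leading-order balance $\G[f^{(0)}] - f^{(0)} = 0$, that is $f^{(0)} = \G[f^{(0)}]$. The crucial point --- and the one I expect to be the main obstacle --- is to show that this fixed-point condition forces $f^{(0)}$ to be the \emph{full} Maxwellian $\M[f^{(0)}]$ at the single equilibrium temperature $\Teq$, rather than merely some anisotropic product Gaussian.

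To settle this I would exploit the self-consistency of the moments. Since $f^{(0)} = \G[f^{(0)}]$, the moments of $f^{(0)}$ coincide with those of its own Gaussian, so by the integral relations~\eqref{eq-intGtr}--\eqref{eq-intGvib} we get $\Theta = \Pi$, $\erot(\Trot) = \erot(\Trotrel)$ and $\evib(\Tvib) = \evib(\Tvibrel)$. Invertibility of $\erot$ and $\evib$ then gives $\Trot = \Trotrel$ and $\Tvib = \Tvibrel$; injecting these into the definitions~\eqref{eq-Trotrel}--\eqref{eq-Tvibrel} forces $\erot(\Ttr) = \erot(\Trot)$ and $\evib(\Ttr) = \evib(\Tvib)$, hence $\Ttr = \Trot = \Tvib$. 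With equal temperatures, \eqref{eq-Ttrrel2} yields $\Ttrrel = \Ttr$, and feeding $\Theta = \Pi$ together with $\Ttrrel = \Ttr$ into~\eqref{eq-defPi} gives $\tfrac{1}{\Prandtl}\Theta = \tfrac{1}{\Prandtl} R\Ttr I$, i.e.\ the pressure tensor is isotropic, $\Theta = R\Ttr I$. Thus all three temperatures equal $\Teq$ and $f^{(0)} = \M[f^{(0)}]$. From the isotropy and evenness of $\M$ in $v-u$ I then read off the zeroth-order moments $\PSigma(\M[f^{(0)}]) = \rho R\Teq\, I = p\,I$ and $q(\M[f^{(0)}]) = 0$.

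Finally, I would substitute $\PSigma(f) = p\,I + O(\Kn)$ and $q(f) = O(\Kn)$ into the exact conservation laws~\eqref{eq: conservation}, which directly produces the conservative Euler system~\eqref{eq-euler}. To obtain the non-conservative form~\eqref{eq-euler_nc}, the mass and momentum equations follow by the usual product-rule manipulation using $p = \rho R\Teq$. For the temperature equation I would split $\mathcal{E} = \demi\rho|u|^2 + \rho e(\Teq)$, subtract the kinetic-energy balance obtained from the momentum and mass equations, reduce to $\rho(\partial_t e + u\cdot\nabla e) + p\,\nabla\cdot u = O(\Kn)$, and use $de = \cv\, dT$ together with $R = c_p(\Teq) - \cv(\Teq) = (\gamma-1)\cv(\Teq)$ to arrive at $\partial_t\Teq + u\cdot\nabla\Teq + (\gamma-1)\Teq\,\nabla\cdot u = O(\Kn)$. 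Apart from the fixed-point identification in the previous paragraph, all of these steps are routine Gaussian moment computations and algebraic rearrangements, so the real content lies in establishing that the leading-order distribution is the global Maxwellian at $\Teq$.
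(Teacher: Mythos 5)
Your proposal is correct and follows essentially the route the paper intends (it simply defers to the analogous computation in~\cite{DMM_2021}): a zeroth-order Chapman--Enskog closure identifying the fixed point of $\G$, followed by substitution of $\PSigma(f)=pI+O(\Kn)$, $q(f)=O(\Kn)$ into the exact conservation laws and the standard reduction to non-conservative form. In particular, your moment self-consistency argument correctly adapts the fixed-point identification to the present model, where the relaxation temperatures are defined via relaxation toward $\Ttr$ rather than $\Teq$, so that $\Trot=\Trotrel$ and $\Tvib=\Tvibrel$ indeed force $\Ttr=\Trot=\Tvib=\Teq$ and $\Theta=R\Teq I$.
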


\subsection{\label{sec: NS}Compressible Navier-Stokes asymptotics}

Our main result is the following.
\begin{proposition} \label{prop:CNS} The moments of $f$, solution of the ES-BGK
  model~\eqref{eq: cinetique}, satisfy the compressible Navier-Stokes
  equations up to $O(\Kn^2)$:
\begin{align*}
&\partial_t \rho + \nabla \cdot (\rho u)=O(\Kn^2), \\
  &\partial_t (\rho u)+\nabla \cdot (\rho u\otimes u)
    +\nabla p= \nabla\cdot \sigma +O(\Kn^2),\\
  &\partial_t \mathcal{E}+\nabla \cdot (\mathcal{E}+p)u
    =-\nabla \cdot q + \nabla \cdot(\sigma u)+O(\Kn^2),
\end{align*}
where, in dimensional form, the viscous stress tensor and the heat flux are given by
\begin{equation*}
\sigma= \mu\left(\nabla u+(\nabla u)^T-\frac23 \nabla \cdot u
    I\right)+ \zeta \nabla \cdot u I,\qquad  q=-\kappa\nabla T,
\end{equation*}
and the viscosity, heat transfer and  volume viscosity coefficients are
\begin{equation*}
  \mu={\tau p\Prandtl},\qquad \kappa=\frac{\mu c_p(\Teq)}{\Prandtl},
  \qquad\zeta =
  pR \left(\taurot\frac{\cvrot}{\cv(T_{eq})^2} + \tauvib\frac{\cvvib(\Teq)}{\cv(T_{eq})^2}\right),
\end{equation*}
with $\taurot = \Zrot\tauc$ and $\tauvib = \Zvib\tauc$, while $\Zrot$,
$\Zvib$, $\tauc$, and $\tau$ are defined at $\Teq$.
\end{proposition}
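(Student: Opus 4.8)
The plan is to carry out a standard Chapman-Enskog expansion on the non-dimensionalized kinetic equation~\eqref{eq: nd_kinetic}, writing $f = f_0 + \Kn f_1 + O(\Kn^2)$, where $f_0 = \G[f]$ at leading order. The zeroth-order analysis is already handled by Proposition~\ref{prop:Euler}: at this order the relaxation temperatures coincide with $\Teq$, so $\G[f]$ reduces to the full Maxwellian $\M[f]$, the pressure tensor is isotropic ($\PSigma = p I$ with $p = \rho R\Teq$), and the heat flux vanishes. The work is therefore to compute the $O(\Kn)$ corrections to $\PSigma(f)$ and $q(f)$, which produce the viscous stress tensor and the Fourier heat flux, and crucially the volume viscosity coming from the internal-energy relaxation.

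First I would extract $f_1$ from the kinetic equation as $f_1 = -\tau(\partial_t f_0 + v\cdot\nabla f_0) + (\text{relaxation corrections})$. The subtlety here, and the reason this model differs from the monoatomic case, is that the Gaussian $\G[f]$ is built from the relaxation temperatures $\Trotrel$, $\Tvibrel$, and the tensor $\Pi$, not from $\Teq$ directly. I would expand these relaxation quantities to first order in $\Kn$ using their defining relations~\eqref{eq-Trotrel}--\eqref{eq-defPi}: since the energy differences $\erot(\Ttr)-\erot(\Trot)$ and $\evib(\Ttr)-\evib(\Tvib)$ are themselves $O(\Kn)$ at this order (the temperatures equilibrate to $\Teq$ at leading order), the deviations of $\Trotrel$, $\Tvibrel$ from $\Trot$, $\Tvib$ and of $\Pi$ from $R\Ttr I$ enter at the right order to feed the transport coefficients. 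I expect the shear viscosity $\mu = \tau p \Prandtl$ and the Prandtl factor in $\kappa$ to emerge exactly as in the monoatomic ES-BGK computation, since the tensor structure of $\Pi$ in~\eqref{eq-defPi} was designed precisely to split the relaxation of the traceless part (rate $\tau\Prandtl$) from that of the heat flux (rate $\tau$).

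The main obstacle, and the genuinely new content of the proposition, will be the volume viscosity $\zeta$. The plan is to use the non-conservative Euler relations~\eqref{eq-euler_nc}, in particular $\partial_t \Teq + u\cdot\nabla\Teq + (\gamma-1)\Teq\,\nabla\cdot u = O(\Kn)$, together with analogous leading-order evolution equations for $\Trot$ and $\Tvib$, to express the material derivatives of the internal temperatures in terms of $\nabla\cdot u$. The key mechanism is that at leading order all internal modes share the common temperature $\Teq$, but the compression $\nabla\cdot u$ drives the translational temperature away from the internal ones on the slow relaxation time scales $\taurot = \Zrot\tauc$ and $\tauvib = \Zvib\tauc$; this lag produces a non-equilibrium contribution to the trace of the pressure tensor proportional to $\nabla\cdot u$, which is exactly $\zeta\,\nabla\cdot u$. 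I would track the correction to $\demi\Trace(\PSigma(f_0+\Kn f_1))$ and reorganize the terms into the stated combination $pR(\taurot \cvrot/\cv(\Teq)^2 + \tauvib\cvvib(\Teq)/\cv(\Teq)^2)$; the appearance of $\cv(\Teq)^2$ in the denominator and the individual mode heat capacities in the numerator is the signature of the two-temperature (fast translational / slow internal) relaxation structure.

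The delicate bookkeeping is ensuring that the relaxation-time appearing in $\zeta$ is the \emph{collision-based} time $\Zrot\tauc$, $\Zvib\tauc$ rather than the ES-BGK time $\tau$ — this is where the present model departs from~\cite{DMM_2021} and where it recovers the physically correct Boltzmann volume viscosity with fast and slow energy modes. I would verify this by carefully propagating the factors $\tau/(\Zrot\tauc)$ and $\tau/(\Zvib\tauc)$ from the definitions~\eqref{eq-Trotrel}--\eqref{eq-Ttrrel2} through the first-order expansion; the $\tau$ from the relaxation operator $\frac1\tau(\G[f]-f)$ cancels against the $\tau$ in these ratios, leaving precisely $\Zrot\tauc$ and $\Zvib\tauc$. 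Once $\mu$, $\kappa$, and $\zeta$ are identified, the momentum and energy balance laws follow by substituting the first-order $\PSigma$ and $q$ into the conservation system~\eqref{eq: conservation}, and I would note that all coefficients are to be evaluated at $\Teq$, consistent with the leading-order equilibration of the three temperatures.
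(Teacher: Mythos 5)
Your proposal follows essentially the same route as the paper: a Chapman--Enskog expansion in which the only genuinely new work is the first-order expansion of the relaxation temperatures $\Trotrel$, $\Tvibrel$, $\Ttrrel$ and of the tensor $\Pi$ in terms of $\nabla\cdot u$ (via the non-conservative Euler equations), with the remaining calculations carried over from the earlier model of~\cite{DMM_2021}. You correctly identify the key mechanism for the volume viscosity (the compression-driven lag of the internal temperatures behind the translational one) and the cancellation of $\tau$ in $\tau/(\Zrot\tauc)$ against the $1/\tau$ of the relaxation operator, which is exactly how the paper obtains $\taurot=\Zrot\tauc$ and $\tauvib=\Zvib\tauc$ in $\zeta$.
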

For the proof of proposition~\ref{prop:CNS}, most of the calculations are very similar
to that given in~\cite{DMM_2021}. The only difference is the first
order expansion of the temperatures and of the tensor $\Pi$ required
to compute the Chapman-Enskog expansion. The
corresponding procedure and results are given in appendix~\ref{sec:hydrodynamic-limits}.

\begin{remark}
  The volume viscosity $\zeta$ is the same as that found by Bruno and
  Giovangigli in~\cite{domenico} for
a one temperature Navier-Stokes asymptotics derived from a
Boltzmann equation for a diatomic gas with two internal modes. Indeed, in~\cite{domenico},
when we assume that the vibrational and rotational modes are
independent and that $\tau_{vib}$ and $\tau_{rot}$ are of the same
order as $\tauc$, then equation (A2) of~\cite{domenico} with
$rap=vib$, $sl=rot$, $K^{vib,rot}=0$, and $\tau_{vib}$ and
$\tau_{rot}$ as given by relation before (70) of~\cite{domenico} (with
equilibrium temperatures), we find exactly $\zeta$.
\end{remark}

Our second result is the Chapman-Enskog distribution for our model.
\begin{proposition}
  The first order expansion of $f$ is
  \begin{equation*}
f=\mathcal{M}[f]-\tau \mbox{Kn}\mathcal{M}[f]\left( A(V,J,K)\cdot\frac{\nabla (RT_{eq})}{\sqrt{RT_{eq}}}+B(V,J,K):\nabla u  \right)  +O(\Kn^2),
  \end{equation*}
  with
  \begin{align*}
& V=\frac{v-u}{\sqrt{RT_{eq}}}, \quad J=\frac{\varepsilon}{RT_{eq}},
                  \quad K=\frac{iT_0}{T_{eq}}, \\
& A=A_{tr}+A_{rot}+A_{vib} =\left( \frac{|V|^2}{2}-\frac{5}{2} \right)V + \left( J-\frac{\delta}{2} \right)V +\left( K-\frac{\delta_v(T_{eq})}{2} \right)V ,\\
& B=B_{tr}+B_{rot}+B_{vib}, \\
& B_{tr}(V)=\Prandtl\left(V \otimes V-\left( \left( \frac{|V|^2}{2}-\frac{3}{2} \right)(\frac23 - \frac\zeta\mu)+1 \right)I\right), \\
& B_{rot}(V,J)=-\left(\frac{\tauc}{\tau}Z_{rot}(\gamma-1) - \frac{\zeta}{\mu}\Prandtl\right)
                \left(J-\frac{\delta}{2} \right)I, \\
& B_{vib}(V,K)=-\left(\frac{\tauc}{\tau}Z_{vib}(\gamma-1) - \frac{\zeta}{\mu}\Prandtl\right)
                \left( K-\frac{\delta_v(T_{eq})}{2} \right)I.
  \end{align*}

   \end{proposition}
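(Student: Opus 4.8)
The plan is to run a Chapman-Enskog expansion of the non-dimensional model~\eqref{eq: nd_kinetic}, writing $f = f^{(0)} + \Kn f^{(1)} + O(\Kn^2)$ and treating the conserved moments $\rho$, $u$, $\Teq$ as exact, so that $f^{(1)}$ has vanishing mass, momentum and total-energy moments. At order $\Kn^{-1}$ the equation reduces to $\G[f^{(0)}] = f^{(0)}$, so $f^{(0)}$ is a Gaussian equal to its own relaxation target; the fixed-point conditions $\Trotrel=\Trot$ and $\Tvibrel=\Tvib$ then force $\Ttr=\Trot=\Tvib=\Teq$ through~\eqref{eq-Trotrel}--\eqref{eq-Tvibrel}, and~\eqref{eq-defPi} forces $\Theta = R\Teq I$, so that $f^{(0)} = \M[f]$.

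At the following order the equation becomes
\[
f^{(1)} = D\G[f^{(0)}]\!\cdot\! f^{(1)} - \tau\,(\partial_t + v\cdot\nabla)\M[f],
\]
where $D\G[f^{(0)}]$ is the Fréchet derivative of the Gaussian map. The second term is handled exactly as in~\cite{DMM_2021}: I would apply the chain rule through $\rho$, $u$ and $\Teq$, eliminate the time derivatives using the compressible Euler equations~\eqref{eq-euler_nc} (Proposition~\ref{prop:Euler}), and re-express everything in the reduced variables $V$, $J$, $K$. This produces the polynomial structure of $A = A_{tr}+A_{rot}+A_{vib}$ and the trace-free part of $B_{tr}$.

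The genuinely new ingredient is $D\G[f^{(0)}]\!\cdot\! f^{(1)}$. Because $\G$ depends on $f$ only through the non-conserved moments $\Theta$, $\erot(\Trot)$ and $\evib(\Tvib)$ --- hence through $\Pi$, $\Trotrel$, $\Tvibrel$ --- the correction $f^{(1)}$ appears on both sides, so the first-order equation is implicit. To close it I would linearize~\eqref{eq-Trotrel}, \eqref{eq-Tvibrel}, \eqref{eq-Ttrrel2} and~\eqref{eq-defPi} about the equilibrium state, obtaining $\Pi$ and the $T_\alpha^{rel}$ at first order as explicit linear functions of the moment corrections $\Theta^{(1)} = \langle (v-u)\otimes(v-u) f^{(1)}\rangle$, $T_{rot}^{(1)}$ and $T_{vib}^{(1)}$. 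Taking the $(v-u)\otimes(v-u)$, $\varepsilon$ and $iRT_0$ moments of the first-order equation then yields a closed linear system for the trace and trace-free parts of $\Theta^{(1)}$ and for $T_{rot}^{(1)}$, $T_{vib}^{(1)}$, with right-hand side built from moments of $\tau\,(\partial_t + v\cdot\nabla)\M[f]$. Solving this system is where the transport coefficients emerge: the deviatoric part of $\Pi$ carries the factor $(\Pr-1)/\Pr$, which inserts the Prandtl number into $B_{tr}$ and is consistent with $\tau = \mu/(\rho R\Ttr\Pr)$ from~\eqref{eq-deftau}; the scalar rotational and vibrational moments relax on the slow scales $\taurot = \Zrot\tauc$ and $\tauvib = \Zvib\tauc$, and the ratios of these to $\tau$, combined with the isotropic part of $\Pi$, produce exactly the coefficients $\frac{\tauc}{\tau}\Zrot(\gamma-1) - \frac{\zeta}{\mu}\Pr$ and $\frac{\tauc}{\tau}\Zvib(\gamma-1) - \frac{\zeta}{\mu}\Pr$ in $B_{rot}$ and $B_{vib}$, with $\cvvib(\Teq)$ and $\delta_v(\Teq)$ entering through the temperature-dependent vibrational heat capacity.

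Once $\Theta^{(1)}$, $T_{rot}^{(1)}$, $T_{vib}^{(1)}$ are known, I would reassemble $f^{(1)}$ and read off $A$ and $B$, then cross-check against Proposition~\ref{prop:CNS} by verifying that the stress and heat-flux moments of $\M[f]\,(A\cdot\nabla(R\Teq)/\sqrt{R\Teq} + B:\nabla u)$ reproduce $\sigma$, $q$ and in particular the volume viscosity $\zeta$. \emph{The main obstacle is precisely this self-consistent closure}: correctly expanding $\Pi$ and the three relaxation temperatures to first order, and tracking how the slow relaxation times $\taurot$, $\tauvib$ enter relative to $\tau$. This is the one step the paper flags as different from~\cite{DMM_2021}, and the bookkeeping among $\Teq$, $\Ttr$, $\Trot$, $\Tvib$ at first order --- complicated by the fact that $\cvvib$ is not constant --- is where all the care is needed.
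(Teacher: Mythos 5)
Your plan follows exactly the route the paper takes: the paper's proof is a reference to the analogous computation in~\cite{DMM_2021}, with the only new ingredient being the first-order expansions of $\Ttrrel$, $\Trotrel$, $\Tvibrel$ and $\Pi$ obtained (in appendix~\ref{sec:hydrodynamic-limits}) by taking the $\demi|v|^2$, $\varepsilon$, $iRT_0$ moments of the kinetic equation, linearizing, and substituting into the definitions~\eqref{eq-Trotrel}--\eqref{eq-defPi} --- which is precisely the self-consistent closure of the moment system you identify as the key step. Your treatment of the orders $\Kn^{-1}$ and $\Kn^{0}$, the elimination of time derivatives via the Euler equations, and the identification of where $\Prandtl$, $\Zrot\tauc$, $\Zvib\tauc$ and $\zeta$ enter all match the paper's argument.
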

This result can be obtained exactly as in~\cite{DMM_2021}.

\section{\label{sec: r_esbgk}Reduced ES-BGK model}

For numerical simulations with a deterministic solver, our ES-BGK
model may be  too expensive, since 
  it depends on many variables: time $t\in\mathbb{R}$, position
$x\in\mathbb{R}^3$, velocity $v\in\mathbb{R}^3$, rotational energy
$\varepsilon\in\mathbb{R}^+$ and discrete levels of the vibrational
energy $i\in\mathbb{N}$. For aerodynamic problems, it is generally 
sufficient to compute the macroscopic velocity and temperatures fields: a
reduced distribution technique~\cite{Chu_1965} (by integration w.r.t rotational
and vibrational energy) permits to drastically reduce the
computational cost, without any approximation {(as long as
  boundary conditions are compatible with this reduction, like usual
  equilibrium inflow boundary conditions and Maxwell reflection at a
  solid wall, for instance)} . We define the three
marginal distributions: 
\begin{equation*}
  \begin{pmatrix}
F(t,x,v) \\ G(t,x,v)\\ H(t,x,v)
 \end{pmatrix}
 = \sum_{i=0}^{+\infty}\int_{\mathbb{R}}
 \begin{pmatrix}
1 \\ \varepsilon \\i R T_0
 \end{pmatrix}
  f(t,x,v,\varepsilon,i) \, d\varepsilon.
\end{equation*}
The macroscopic quantities defined
by~\eqref{eq-mtsf}--\eqref{eq-Theta_q}  now depend on
$F$, $G$ and $H$ through:
\begin{equation}
\begin{aligned}
  & \rho=\la F \ra_{v},\quad \rho u=\la vF \ra_{v},\\
  & \rho E_{tr}(f)=\la\frac{1}{2}|v|^2 F \ra_{v},
  \quad \rho E_{rot}(f)= \la G \ra_{v},\quad \rho E_{vib}(f)=\la H \ra_{v}, \\
& \rho \Theta =\la(v-u)\otimes (v-u)F \ra_{v},\quad q= \la (\frac{1}{2}|v-u|^2 F+G+H)(v-u) \ra_{v},
\end{aligned}
\end{equation}
where $\la . \ra_{v}$ denotes integrals with respect to $v$ only.
The reduced ES-BGK is obtained by multiplying our kinetic model~\eqref{eq: cinetique}-\eqref{eq: operator}
by the vector $(1,\varepsilon,iRT_0)^T$ and by summing and integrating
w.r.t to $i$ and $\varepsilon$, respectively: it is written
\begin{equation}
\label{eq: r_model}
\partial_t \F +v\cdot \nabla \F = \frac{1}{\tau}({\boldsymbol{\mathcal{G}}}[\F]-\F).
\end{equation}
with $\F = (F,G,H)$ and $ {\boldsymbol{\mathcal{G}}}[\F] = ({{\mathcal{G}}}_{tr}[f],\erot(\Trotrel){{\mathcal{G}}}_{tr}[f],\evib(\Tvibrel){{\mathcal{G}}}_{tr}[f])$.

By using the same argument as in~\cite{DMM_2021} and the result of 
proposition~\ref{prop:theoH}, we can prove the following H-theorem for
this reduced model. The proof is left to the reader.
\begin{proposition}
The functional $\entred(\F) = \la h(\F) \ra_v$, where
\begin{equation}\label{eq-hFGH} 
h(\F) = F\left[
  \left(1+\frac{\delta}{2}\right)\log\left(\frac{F}{G^{\frac{\delta}{2+\delta}}}\right)
  +\log\left(\frac{RT_0F}{RT_0F+H}\right)
   \right]  +\frac{H}{RT_0}\log\left(\frac{H}{RT_0F+H}\right)
\end{equation}
is an entropy for the reduced ES-BGK system~\eqref{eq: r_model} and we
have
  \begin{equation}\label{eq-Htheo-reduced} 
    \partial_t \entred(\F) + \nabla \cdot \la v h(\F)\ra_v =
    \la \nabla_{\F}h(\F)\cdot (    \frac{1}{\tau}{\boldsymbol{\mathcal{G}}}[\F] - \F)
    \ra_v
    \leq 0,
  \end{equation}
under conditions of
propositions~\ref{prop:pos_ener},~\ref{pipositive},
and~\ref{prop:theoH}. The equilibrium is
reached (the right-hand side of~\eqref{eq-Htheo-reduced} is zero) if, and
only if,
\begin{equation*}
\F = (\mathcal{M}_{tr}[f],
e_{rot}(T_{eq})\mathcal{M}_{tr}[f], e_{vib}(T_{eq})\mathcal{M}_{tr}[f]),  
  \end{equation*}
where $\mathcal{M}_{\mo}[f]$ is the Maxwellian for translation modes
(see section~\ref{subsec:construct}).
  \end{proposition}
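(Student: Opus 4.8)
The plan is to deduce this reduced H-theorem from the full kinetic one, Proposition~\ref{prop:theoH}, by a Chu-type reduction, exploiting that both the reduced operator $\boldsymbol{\mathcal{G}}[\F]$ and the density $h(\F)$ descend from distributions that are of product form in the rotational and vibrational variables. First I would attach to any admissible triple $\F=(F,G,H)$ the reduced-form distribution
\begin{equation*}
f_{\F}(v,\varepsilon,i) = F(v)\,\frac{\Lambda(\delta)}{(R T_G)^{\delta/2}}\,\varepsilon^{\frac{\delta-2}{2}}e^{-\varepsilon/(RT_G)}\,(1-e^{-T_0/T_H})\,e^{-iT_0/T_H},
\end{equation*}
with $T_G=\erot^{-1}(G/F)$ and $T_H=\evib^{-1}(H/F)$ chosen so that the marginals of $f_{\F}$ against $(1,\varepsilon,iRT_0)$ are exactly $(F,G,H)$. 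A direct computation factorizing over $v$, $\varepsilon$ and $i$ shows that $\ent(f_{\F})=\la h(\F)\ra_v$ up to conservatively transported terms that are linear in the mass: the $(1+\tfrac{\delta}{2})\log(F/G^{\delta/(2+\delta)})$ block comes from the $\varepsilon$-Maxwellian (using $\erot(T_G)=G/F=\tfrac\delta2 RT_G$), while the two $\log(RT_0F/(RT_0F+H))$ and $\tfrac{H}{RT_0}\log(H/(RT_0F+H))$ terms are precisely the entropy of the geometric vibrational law. Equivalently, $f_{\F}$ is the unique minimizer of the local (in $v$) Boltzmann entropy among distributions with marginals $\F$, and $h(\F)$ is that minimum value.

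The structural fact I would extract from this minimization is that the minimizer satisfies
\begin{equation*}
\log\!\bigl(f_{\F}/\varepsilon^{\frac{\delta}{2}-1}\bigr) = a(v)+b(v)\,\varepsilon+c(v)\,iRT_0,
\end{equation*}
i.e.\ it is affine in the conserved moments, with coefficients equal to the Lagrange multipliers of the constrained minimization. By Legendre duality these are the components of $\nabla_{\F}h(\F)$, exactly in the $G$- and $H$-slots and up to an additive constant in the $F$-slot (coming from the linear terms dropped in the definition of $h$); I would confirm this by differentiating the explicit formula for $h$. I would then record the closure property: since $\Gtr[f]$, $\Trotrel$ and $\Tvibrel$ depend on $f$ only through its reduced moments, $\mathcal{G}[f_{\F}]$ is again of reduced form, and its marginals against $(1,\varepsilon,iRT_0)$ are precisely the components of $\boldsymbol{\mathcal{G}}[\F]=(\Gtr,\erot(\Trotrel)\Gtr,\evib(\Tvibrel)\Gtr)$, by the integral relations~\eqref{eq-intGtr}--\eqref{eq-intGvib}.

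With these ingredients the dissipation collapses. The chain rule applied to the reduced equation~\eqref{eq: r_model} gives, as an identity, $\partial_t h(\F)+v\cdot\nabla h(\F)=\nabla_{\F}h(\F)\cdot\tfrac1\tau(\boldsymbol{\mathcal{G}}[\F]-\F)$; integrating over $v$ yields the equality in~\eqref{eq-Htheo-reduced}. Using the affine representation above together with the marginal identities, the dissipation integrand equals $\la \log(f_{\F}/\varepsilon^{\frac{\delta}{2}-1})\,\tfrac1\tau(\mathcal{G}[f_{\F}]-f_{\F})\ra_{\varepsilon,i}$ up to a term proportional to $\tfrac1\tau(\Gtr[f]-F)$ that integrates to zero in $v$ by mass conservation. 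Hence, after integration in $v$, the right-hand side of~\eqref{eq-Htheo-reduced} coincides with the full log-weighted production $\la\tfrac1\tau(\mathcal{G}[f_{\F}]-f_{\F})\log(f_{\F}/\varepsilon^{\frac{\delta}{2}-1})\ra_{v,\varepsilon,i}$, which is nonpositive by Proposition~\ref{prop:theoH} under the stated conditions. The equilibrium statement follows from the same proposition: the production vanishes iff $f_{\F}=\mathcal{M}[f_{\F}]$, which forces every relaxation temperature to equal $\Teq$ and the tensor to be isotropic, i.e.\ $F=\Mtr[f]$, $G=\erot(\Teq)\Mtr[f]$ and $H=\evib(\Teq)\Mtr[f]$.

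I expect the main obstacle to be the identity $\nabla_{\F}h=(a,b,c)$ and the ensuing collapse of the full log-weighted production onto the chain-rule form: this is where the product structure of $f_{\F}$ and of $\mathcal{G}[f_{\F}]$ is used essentially, and the only place where the explicit, somewhat unwieldy expression for $h$ must be reconciled with its abstract minimization characterization, including the careful bookkeeping of the conservative linear terms that distinguish $\entred(\F)$ from $\ent(f_{\F})$.
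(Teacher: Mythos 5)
Your proposal is correct and follows essentially the route the paper intends: the paper defers the proof to ``the same argument as in [DMM\_2021] together with Proposition~\ref{prop:theoH}'', which is precisely the Chu-type reduction you spell out — identify $h(\F)$ (up to terms linear in $F$, hence conservative) with the minimum of $\ent$ over distributions with marginals $\F$, note that the minimizer $f_{\F}$ has $\log(f_{\F}/\varepsilon^{\frac{\delta}{2}-1})$ affine in $(1,\varepsilon,iRT_0)$ with coefficients $\nabla_{\F}h$, and use that $\mathcal{G}[f_{\F}]$ depends only on the marginals so that the reduced production coincides with the full log-weighted production at $f_{\F}$, to which Proposition~\ref{prop:theoH} applies. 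Your duality identity for $\nabla_{\F}h$ checks out against the explicit formula, and the equilibrium characterization follows as you state.
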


\section{Numerical results} \label{sec: num}
To test the ES-BGK model presented here, we will choose a Monte Carlo approach. For this purpose, relaxation processes to the equilibrium state in an adiabatic box will be investigated and compared with analytical solutions and results of the DSMC method. In the homogeneous test cases, the equation to be solved simplifies to 
\begin{equation}
\label{eq:homESBGK}
\partial_t f=\frac{1}{\tau}({\mathbf{\mathcal{G}}}[f]-f).
\end{equation}

In the Monte Carlo method, the distribution function is represented by a linear combination of $N$ delta functions in phase space with a numerical weighting $w$. The points in the phase space are often interpreted as particles, where $w$ corresponds to the number of real particles that a simulation particle represents. For the Monte Carlo method, \eqref{eq:homESBGK} is integrated analytically for a time step $\Delta t=t^{n+1}-t^n$~\cite{pfeiffer2018particle}:
\begin{equation}
    f^{n+1} = (1-\exp{(-\Delta t/\tau)})\mathbf{\mathcal{G}}[f^n] + \exp{(-\Delta t/\tau)}f^n.
\end{equation}

The idea is that each of the $N$ particles relaxes with probability $(1-\exp{(-\Delta t/\tau)})$, i.e. a new state is sampled from the distribution function $\mathbf{\mathcal{G}}[f^n]$. Different ways to efficiently sample velocities from the ES-BGK distribution are described in~\cite{pfeiffer2018particle}. The new rotational energy is sampled using an exponential distribution depending on $T_{rot}^{rel}$. The new vibrational quantum state is sampled using a standard Acceptance-Rejection method as described in~\cite{bird} depending on $T_{vib}^{rel}$. Since energy and momentum are only preserved in the mean here, we choose a large number of particles $N$ and small time steps $\Delta t/\tau<0.1$. This reduces the statistical noise and ensures stability.

\subsection{Comparison with analytical Results}
To compare the model with an analytical solution, the translational-rotational and translational-vibrational relaxation are first considered separately, i.e. $Z_{vib}=\infty$ and $Z_{rot}=\infty$ respectively. If one also assumes an isothermal relaxation ($T_{tr}(t)=T_{tr}(t=\infty)$), i.e. the thermal velocities do not relax, the characteristic time $\tauc$ is constant (since it depends on the translation temperature) and it is possible to define an analytical solution of the Landau-Teller equation
\begin{equation}
    \frac{E_i(t=\infty)-E_i(t)}{E_i(t=\infty)-E_i(t=0)}=e^{-t/Z_i\tauc}
    \label{eq:analyLT}
\end{equation}
with $i$ being the rotational or vibrational part.

The simulations are done with nitrogen N$_2$ with a characteristic vibrational temperature of $T_0^\mathrm{N_2}=3395\,\mathrm{K}$ using a VHS collision model. This means an exponential ansatz is used for the viscosity depending on the VHS parameters $T^{VHS}_{dref}=273\,\mathrm{K}$, $d^{VHS}_{dref}=4.17\cdot10^{-10}\,\mathrm{m}$ and $\omega_{VHS}=0.74$ as described in \cite{pfeiffer2018}. Thus, the analytical value for $\tauc$ can be calculated with the fixed translational temperature as described in Sec. \ref{subsec:chartime}.

The particle density in the simulations was chosen to be $n=2\cdot10^{22}\,\mathrm{m^{-3}}$ which corresponds to about 4 million particles in our simulation. The translational temperature is fixed to $T_{tr}=T_{eq}=16000\,\mathrm{K}$, the initial temperatures of the rotational and vibrational states are $T_{rot}=T_{vib}=8000\,\mathrm{K}$. The collision numbers are chosen to $Z_{rot}=5$ and $Z_{vib}=10$. The results for the normalized energy difference (left hand side of \eqref{eq:analyLT}) are depicted in Fig.~\ref{fig:LT_relax} showing a very good agreement for the rotational as well as vibrational relaxation.

\begin{figure}
\centering
\subfloat[Rotational relaxation.\label{fig:ltrot}]{\begin{tikzpicture}
\tikzset{
every pin/.style={font=\tiny, thick, fill opacity=0.45,text opacity=1, text=black},
every pin edge/.style={draw=black},
small dot/.style={fill=black,circle,scale=0.25},
}
\begin{axis}[
domain=0:1.5E-6,
width=0.45\textwidth,
height=0.45\textwidth,
xlabel={\small t [s] },
ylabel={\small Normalized energy difference [-]},
%x tick label style={
%    /pgf/number format/.cd,
%        fixed,
%        fixed zerofill,
%        precision=2,
%    /tikz/.cd
%},
%xmin=-0.5,
%xmax=0.5,
%ymin=1.5,
%ymax=3.4,
%scaled ticks=false,
%enlargelimits=0.01,
%bar width=2pt,
ylabel shift=-3 pt,
%axis y line*=left,
%ymin=0,
%ymax=125,
%ytick={0,2000,4000,6000},
%xtick={2000,4000,6000,8000,10000},
%yticklabels={$0$,$2$,$4$,$6$},
every axis plot/.append style={thick},
legend style={
	font=\tiny,
	legend cell align=left,
	at={(0.95,0.75)},
%  pos = north east,
	anchor=south east,
			fill=none,
			draw=none}
]
\addplot[color=black, smooth, no markers] {exp(-x*4045190.3662899621 )};
\addplot[color=red, smooth, only marks, mark repeat=15,mark=o, restrict x to domain=0:2.2E-6,mark size =1.3]  table [col sep=comma,x expr=(\thisrow{001-TIME}), 
y expr=(4.40465243514385E-013-\thisrow{005-E-Rot001     })/(4.40465243514385E-013-2.20763473670618E-013)] {LT_Rot.csv};

\legend{Analytic, ESBGK}
\end{axis}
\end{tikzpicture}}
\subfloat[Vibrational relaxation.\label{fig:ltvib}]{\begin{tikzpicture}
\tikzset{
every pin/.style={font=\tiny, thick, fill opacity=0.45,text opacity=1, text=black},
every pin edge/.style={draw=black},
small dot/.style={fill=black,circle,scale=0.25},
}
\begin{axis}[
domain=0:3.5E-6, %1.89E-5,
width=0.45\textwidth,
height=0.45\textwidth,
xlabel={\small t [s] },
ylabel={\small Normalized energy difference [-]},
%x tick label style={
%    /pgf/number format/.cd,
%        fixed,
%        fixed zerofill,
%        precision=2,
%    /tikz/.cd
%},
%xmin=-0.5,
%xmax=0.5,
%ymin=1.5,
%ymax=3.4,
%scaled ticks=false,
%enlargelimits=0.01,
%bar width=2pt,
ylabel shift=-3 pt,
%axis y line*=left,
%ymin=0,
%ymax=125,
%ytick={0,2000,4000,6000},
%xtick={2000,4000,6000,8000,10000},
%yticklabels={$0$,$2$,$4$,$6$},
every axis plot/.append style={thick},
legend style={
	font=\tiny,
	legend cell align=left,
	at={(0.95,0.7)},
%  pos = north east,
	anchor=south east,
			fill=none,
			draw=none}
]
\addplot[color=black, smooth, no markers] {exp(-x*2022595.1831450218)};
\addplot[color=red, smooth, only marks, mark repeat=25, mark =o, restrict x to domain=0:3.5E-6,,mark size =1.3]  table [col sep=comma,x expr=(\thisrow{001-TIME}), y expr=(4.42958911051589E-013-\thisrow{004-E-Vib001     })/(4.42958911051589E-013-2.24001935183863E-013)] {LT_Vib.csv};

\legend{Analytic, ESBGK}
\end{axis}
\end{tikzpicture}}
\caption{Comparison of ES-BGK simulation results with analytical Landau-Teller solution.}
\label{fig:LT_relax}
\end{figure}
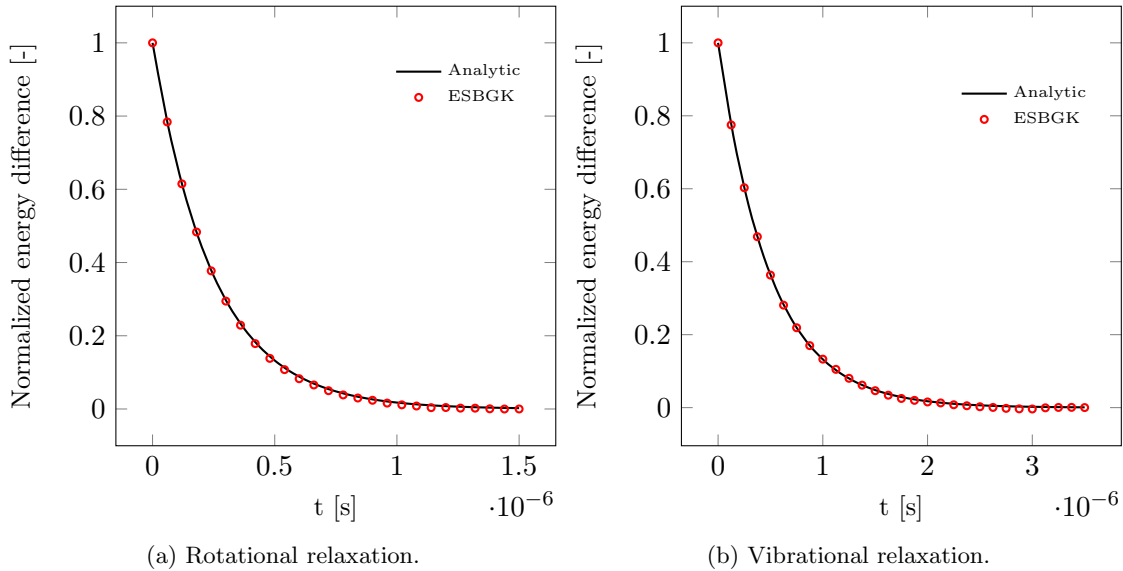

\subsection{Comparison with DSMC Results}
In this test case, a simultaneous relaxation of the translational, rotational and vibrational temperature is demonstrated and compared with DSMC. Furthermore, the difference is shown when it is assumed that there is only one relaxation time, i.e. $\tauc=\tau$. For this simulation the same parameters have been used as before with the exception of $T_{tr}=16000\,\mathrm{K}$, $T_{vib}=8000\,\mathrm{K}$, $T_{rot}=12000\,\mathrm{K}$, $Z_{rot}=5$ and $Z_{vib}=50$. The DSMC simulation was carried out with the identical VHS parameters. In addition, the prohibiting double relaxation method was used to reproduce the Landau-Teller equation as described in various studies~\cite{zhang:2013,Pfeiffer2016b}.
The results are depicted in Fig.~\ref{fig:compdsmcbgk} and excellent agreement is found between DSMC and the proposed ES-BGK model. Furthermore, it is easy to see that the model with only one relaxation time does not produce the correct Landau-Teller relaxation curves when the same $Z_{rot}$ and $Z_{vib}$ are used: this clearly proves the improvement of our new model.
\begin{figure}
\centering
\begin{tikzpicture}
%\tikzset{
%every pin/.style={font=\tiny, thick, fill opacity=0.45,text opacity=1, text=black},
%every pin edge/.style={draw=black},
%small dot/.style={fill=black,circle,scale=0.25},
%}
\begin{axis}[
%domain=0:2.2E-6,
width=0.8\textwidth,
height=0.6\textwidth,
xlabel={\small t [s] },
ylabel={\small Temperature [K]},
%x tick label style={
%    /pgf/number format/.cd,
%        fixed,
%        fixed zerofill,
%        precision=2,
%    /tikz/.cd
%},
%xmin=-0.5,
%xmax=0.5,
%ymin=1.5,
%ymax=3.4,
scaled x ticks = true,
ylabel shift=-3 pt,
%axis y line*=left,
%ymin=0,
%ymax=125,
%ytick={0,2000,4000,6000},
%xtick={2000,4000,6000,8000,10000},
%yticklabels={$0$,$2$,$4$,$6$},
every axis plot/.append style={thick},
legend style={
  legend columns = 3,
	font=\tiny,
	legend cell align=left,
	at={(0.98,0.725)},
%  pos = north east,
	anchor=south east,
			fill=none,
			draw=none}
]
\addplot[color=black, smooth, restrict x to domain=0:1.35E-5]  table [col sep=comma,x expr=(\thisrow{001-TIME}), y expr=(\thisrow{002-TempTra-001     })] {DSMC_full.csv};
\addplot[color=red, smooth, only marks, mark =o, mark repeat=55, restrict x to domain=0:1.35E-5,mark size =1.3]  table [col sep=comma,x expr=(\thisrow{001-TIME}), y expr=(\thisrow{002-TempTra-001     })] {BGK_full.csv};
\addplot[color=blue, smooth, restrict x to domain=0:1.35E-5]  table [col sep=comma,x expr=(\thisrow{001-TIME}), y expr=(\thisrow{002-TempTra-001     })] {BGK_full_wnu.csv};
\addplot[color=black,dotted, smooth, restrict x to domain=0:1.35E-5]  table [col sep=comma,x expr=(\thisrow{001-TIME}), y expr=(\thisrow{003-TempVib001     })] {DSMC_full.csv};
\addplot[color=red, smooth, only marks, mark =x, mark repeat=50, restrict x to domain=0:1.35E-5,mark size =1.3]  table [col sep=comma,x expr=(\thisrow{001-TIME}), y expr=(\thisrow{003-TempVib001     })] {BGK_full.csv};
\addplot[color=blue,dotted, smooth, restrict x to domain=0:1.35E-5]  table [col sep=comma,x expr=(\thisrow{001-TIME}), y expr=(\thisrow{003-TempVib001     })] {BGK_full_wnu.csv};
\addplot[color=black,dashed, smooth, restrict x to domain=0:1.35E-5]  table [col sep=comma,x expr=(\thisrow{001-TIME}), y expr=(\thisrow{005-TempRot001     })] {DSMC_full.csv};
\addplot[color=red, smooth, only marks, mark repeat=55, mark=square, restrict x to domain=0:1.35E-5,mark size =1.3]  table [col sep=comma,x expr=(\thisrow{001-TIME}), y expr=(\thisrow{005-TempRot001     })] {BGK_full.csv};
\addplot[color=blue,dashed, smooth, restrict x to domain=0:1.35E-5]  table [col sep=comma,x expr=(\thisrow{001-TIME}), y expr=(\thisrow{005-TempRot001     })] {BGK_full_wnu.csv};

% \legend{DSMC: $T_{\mathrm{tr}}$\qquad { }, ESBGK: $T_{\mathrm{tr}}$\qquad { },  ESBGK  with $\tauc=\tau$: $T_{\mathrm{tr}}$, DSMC: $T_{\mathrm{V}}$ \qquad { } , 
% ESBGK: $T_{\mathrm{V}}$\qquad { }, ESBGK with $\tauc=\tau$: $T_{\mathrm{V}}$, DSMC: $T_{\mathrm{R}}$ \qquad { },  ESBGK: $T_{\mathrm{R}}$\qquad { } ,  ESBGK with $\tauc=\tau$: $T_{\mathrm{R}}$}

\legend{DSMC: $\Ttr$\qquad { }, ESBGK: $\Ttr$\qquad { },  ESBGK  with $\tauc=\tau$: $\Ttr$, DSMC: $\Tvib$ \qquad { } , 
ESBGK: $\Tvib$\qquad { }, ESBGK with $\tauc=\tau$: $\Tvib$, DSMC: $\Trot$ \qquad { },  ESBGK: $\Trot$\qquad { } ,  ESBGK with $\tauc=\tau$: $\Trot$}

\end{axis}
\end{tikzpicture}
\caption{Comparison of relaxation process of $T_{tr}$, $T_{rot}$ and $T_{vib}$ between DSMC
and ES-BGK as well as ES-BGK with only one relaxation time $\tauc=\tau$.}
\label{fig:compdsmcbgk}
\end{figure}
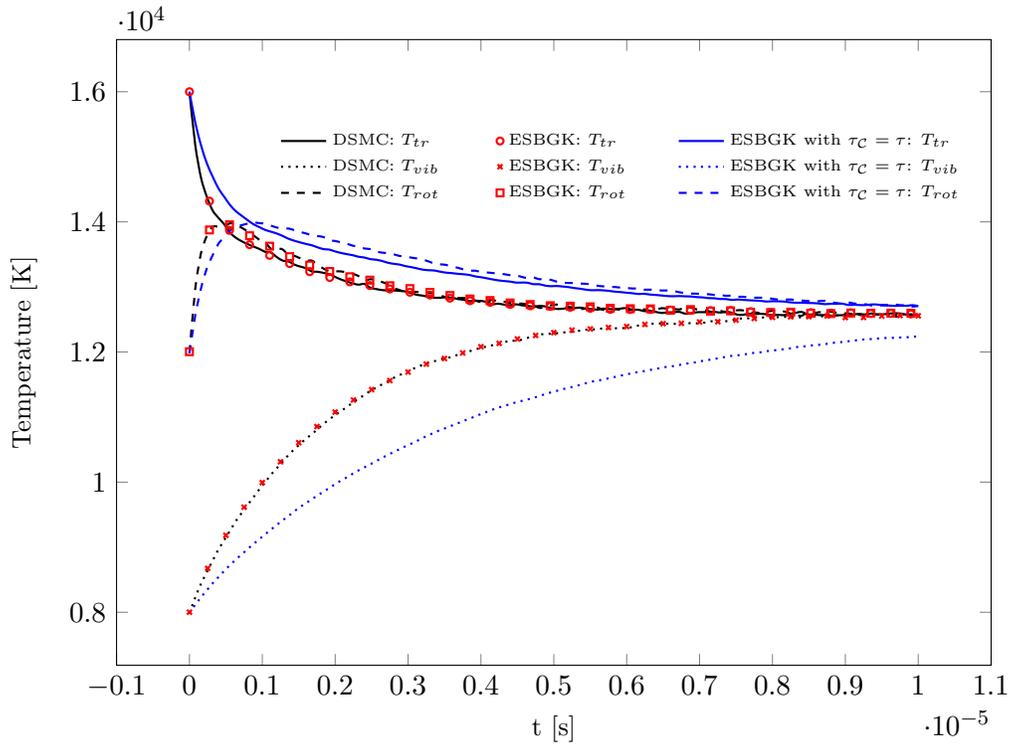

\section{Conclusion}

In this paper, we have proposed an ES-BGK model for diatomic gases
that accounts for translational-rotational and
translational-vibrational energy exchanges. 
% It is consistant with Landau-Teller and Jean equations and their standard DSMC implementations.
It is consistent with the general definition of the vibrational and rotational collision numbers 
that are also commonly used in DSMC solvers to reproduce the Landau-Teller
and Jean equations.

Our model is based on a correction of a previous model~\cite{DMM_2021}
and is induced by the numerical method of~\cite{pfeiffer2018}. We have
proved this model satisfies the H-theorem and fits the correct
transport coefficients. Even the volume viscosity is consistent
with that obtained for a Boltzmann equation with two internal energy
modes in~\cite{domenico}.

In the purely translational-rotational case, our model also gives a
correction of the standard ES-BGK model of Andriès et al.~\cite{ALPP}
with a correction factor of the collision number that can be as large
as $3$.

A reduced version of our model has been derived to eliminate the
dependency to the internal energy variables: the reduced model should
make it possible numerical simulations of diatomic gas flows with a
computational cost of same order of magnitude as for a
monoatomic gas.

This model will be extended to polyatomic molecules with more than two atoms in a forthcoming work.

\appendix

\section{Convexity of the vibrational energy}
\label{app:evibconvex}

Differentiation of~\eqref{eq-icvib} with respect to $T$ gives
\begin{equation*}
  \frac{d}{dT}\cvvib(T) = \frac{2RT_0^2}{T^3e^{T_0/T}(e^{T_0/T}-1)^2}
  \left(\frac{T_0}{2T}\coth(\frac{T_0}{2T})  -1\right),
\end{equation*}
which is positive, from the known inequality $\coth(x)\geq 1/x$ for
every positive $x$. This proves that $\cvvib$ is an increasing
function, and hence that $\evib$ is convex. Passing to the limit $T=0$
in~\eqref{eq-icvib} shows that $\cvvib$ is bounded
by $R$.

We come to the proof of the assertion made in first case of step 3 for the
proof of proposition~\ref{prop:theoH}. We have $\Ttr\leq T_1\leq
\Tvib$ and $\Tvibrel\leq T_2\leq \Tvib$, while $\Tvibrel$ is between
$\Ttr$ and $\Tvib$, and we want to prove that $\cvvib(T_2)\geq
\cvvib(T_1)$. This is a simple consequence of the convexity of
$\evib$, as it is shown below.

It is well known that for for every convex function $\phi$, the ratio
$(\phi(y)-\phi(x))/(y-x)$ is increasing in $x$ for every fixed
$y$. This result applied to $\phi=\evib$, $y=\Tvib$ and $x=\Ttr$ then $x=\Tvibrel$ gives
\begin{equation*}
  (\evib(\Tvib)-\evib(\Tvibrel))/(\Tvib-\Tvibrel)
  \geq
    (\evib(\Tvib)-\evib(\Ttr))/(\Tvib-\Ttr).
  \end{equation*}
Then we remind that $T_1$ and $T_2$ are defined by~\eqref{eq-defT1T2}:
the previous inequality is then exactly $\cvvib(T_2)\geq
\cvvib(T_1)$.

\section{Inequality for $\det(\Theta)/det(\Pi)$}
\label{subsec:ineqthetapi}
This proof is very close to that given in~\cite{ALPP}. First, note
that~\eqref{ineqthetapi} is equivalent to
\begin{equation}  \label{eq-inegthetapi}
\det(\frac{\Theta}{R\Ttr}) \leq \det (\frac{\Pi}{R\Ttrrel}).
  \end{equation}
Then as remarked in the proof of proposition~\ref{prop:pos_ener}, we can
work in the same basis in which $\Theta$ and $\Pi$ are diagonal
tensors, and we denote by $\mu_i$ the three positive eigenvalues
of $\Theta/R\Ttr$, whose sum is
$3$. Then by using definition~\eqref{eq-defPi} of $\Pi$,~\eqref{eq-inegthetapi} reads
\begin{equation}  \label{eq-inegthetapi2}
\prod_{i=1}^3 \mu_i \leq \prod_{i=1}^3 (1 + \alpha (\mu_i - 1)),
\end{equation}
where we set $\alpha  = \frac{\Prandtl -
  1}{\Prandtl}\frac{\Ttr}{\Ttrrel}$. We remind that $\Pi$ is positive
definite under assumption~\eqref{condpi}, so that each terms in the
product of the right-hand side of~(\ref{eq-inegthetapi2}) is
positive. Therefore, we can apply the $\log$ function to this
inequality to get
\begin{equation}  \label{eq-inegthetapi3}
\sum_{i=1}^3 \log \mu_i \leq \sum_{i=1}^3 \log(1 + \alpha (\mu_i - 1)).
\end{equation}
This is the inequality we prove now.

As usual, the idea is to use convexity properties. However, since
$\alpha$ is negative for $\Prandtl$
between 2/3 and 1, the right-hand side of~(\ref{eq-inegthetapi3})
is rewritten by using $(\mu_1 +\mu_2 +\mu_3)/3 = 1$. Indeed, we get
\begin{equation}\label{eq-inegthetapi4}
  \begin{split}
    \sum_{i=1}^3 \log(1 + \alpha (\mu_i - 1))
   &  = \sum_{i=1}^3 \log( \frac13(\mu_1 +\mu_2 +\mu_3) + \alpha (\mu_i
    - \frac13(\mu_1 +\mu_2 +\mu_3)  )) \\
   & = \sum_{i=1}^3 \log( \frac13 (1-\alpha)(\mu_{i_1} + \mu_{i_2}) +
   \frac13(1+2\alpha)\mu_i ),
\end{split}
\end{equation}
where $i_1$ and $i_2$ are the indices that follow $i$ in the circular
permutation of $\lbrace 1,2,3\rbrace$.
Now we assume $\alpha>-1/2$ (see below), so that the argument of the
$\log$ function above is a convex combination of the $\mu_i$. Since $\log$ is concave, the
Jensen inequality gives
\begin{equation}\label{eq-inegthetapi5}
  \begin{split}
\sum_{i=1}^3 \log( \frac13 (1-\alpha)(\mu_{i_1} + \mu_{i_2}) +
\frac13(1+2\alpha)\mu_i ) & \geq
\sum_{i=1}^3 \frac13 (1-\alpha)(\log(\mu_{i_1}) + \log(\mu_{i_2})) +
\frac13(1+2\alpha)\log (\mu_i ) \\
& = \sum_{i=1}^3 \log \mu_i = \det (\frac{\Theta}{R\Ttr}).
  \end{split}
\end{equation}
This proves~(\ref{eq-inegthetapi3}).

It remains to prove $\alpha>-1/2$: this is actually a consequence of
assumption~\eqref{condpi} that garantees that $\Pi$ is positive
definite (see the proof of proposition~\ref{pipositive}: the positivity
of $\alpha$ is equivalent to that of the right-hand side of
\eqref{eq-minlambdaiPi}, which is given by~\eqref{eq-cond2}, and hence by~\eqref{condpi}).

\section{Elements of proof for the Chapman-Enskog expansion}
\label{sec:hydrodynamic-limits}

Integration of~\eqref{eq: nd_kinetic} multiplied by $\demi |v|^2$,
$\varepsilon$, and $iRT_0$, respectively, gives macroscopic evolution
equations of $\etr(\Ttr)$, $\erot(\Trot)$, and
$\evib(\Tvib)$. Linearization of these equations by using $\partial_t
e_{\alpha}(T_{\alpha}) = \cv^{\alpha}(T_{\alpha})\partial_tT_{\alpha}$
give first order expansions of $\etr(\Ttrrel)$, $\erot(\Trotrel)$, and
$\evib(\Tvibrel)$. The definition of the relaxation
energies~\eqref{eq-Trotrel}--\eqref{eq-Ttrrel2} and other successive
linearizations lead to the
following first order expansions:
\begin{align*}
  & T_{rot}^{rel}
    =T_{eq}\left(1-\Kn\tauc (\gamma-1)\left(
                   Z_{rot}\left(\frac{\cvrot}{\cv(T_{eq})}-1\right)
                   + Z_{vib}\frac{\cvvib(T_{eq})}{\cv(T_{eq})}
    + \frac{\tau}{\tauc}\right)
    \nabla\cdot u\right)+O(\Kn^2),\\
  &  T_{vib}^{rel}=T_{eq}\left(1-\Kn\tauc (\gamma -1)\left(
    Z_{vib}\left(\frac{\cvvib(T_{eq})}{\cv(T_{eq})}-1\right)
    +Z_{rot}\frac{\cvrot}{\cv(T_{eq})}
    +\frac{\tau}{\tauc}\right)
    \nabla\cdot u\right)+O(\Kn^2),\\
  &  T_{tr}^{rel}=T_{eq}\left(1-\Kn\tauc (\gamma -1)\left(
    Z_{vib}\frac{\cvvib(T_{eq})}{\cv(T_{eq})}
    +Z_{rot}\frac{\cvrot}{\cv(T_{eq})}
    +\frac{\tau}{\tauc}\left(1-\frac{\cv(T_{eq})}{\cvtr}\right)
    \right)\nabla\cdot u\right)+O(\Kn^2).
\end{align*}
These relations give the first order expansion of the relaxation
tensor
\begin{equation*}
  \begin{split}
\Pi=&\frac{RT_{eq}}{\Prandtl}I+\frac{\Prandtl-1}{\Prandtl}\Theta\\
+&\Kn\tau R(\gamma-1)T_{eq}
\left(
  \left(\frac{\cv(T_{eq})}{\cvtr}-1\right)
  -\frac{1}{\Prandtl}\frac{\tauc}{\tau}
      \left(Z_{vib}\frac{\cvvib(T_{eq})}{\cv(T_{eq})}
            +Z_{rot}\frac{\cvrot}{\cv(T_{eq})}\right)\right) \nabla\cdot u I\\&+O(\Kn^2).
\end{split}
\end{equation*}
The other calculations are standard and can be found in~\cite{DMM_2021}.

\section*{Acknowledgments}

"Marcel Pfeiffer has received funding from the European Research Council (ERC) under the European Union’s Horizon 2020 research and innovation programme (grant agreement No. 899981 MEDUSA)"

%%%%%% -------------------------------
% Bibliography

\bibliographystyle{authordate1}
\bibliography{biblio}

%%%%%% -------------------------------
% Figures

\end{document}